\newtheorem{thm}{Theorem}[section]
\newtheorem{cor}{Corollary}[section]
\theoremstyle{definition}
\newtheorem{re}{Remark}[section]
\newtheorem{ex}{Example}[section]
\newcommand{\C}{\mathbb{C}}
\newcommand{\Q}{\mathbb{Q}}
\newcommand{\Z}{\mathbb{Z}}
\newcommand{\be}{\begin{eqnarray}}
\newcommand{\ee}{\end{eqnarray}}
\newcommand{\nn}{{\nonumber}}
\newcommand{\dd}{\displaystyle}
\newcommand{\ra}{\rightarrow}
\title{ Butson Hadamard matrices, bent sequences, and spherical codes }
\author{ Minjia Shi, Danni Lu\thanks{Minjia Shi and Danni Lu are with the Key Laboratory of Intelligent Computing Signal
		Processing, Ministry of Education, School of Mathematical Sciences, Anhui
		University, Hefei 230601, China; State Key Laboratory of integrated Service Networks, Xidian University, Xi'an,
		710071, China. smjwcl.good@163.com, ludanni\_in@163.com},  Andr\'es Armario \thanks{Departamento de Matemática Aplicada I, Universidad de Sevilla, Avda. Reina Mercedes s/n, 41012, Seville, Spain. armario@us.es This research was supported by the Strategic R+D Project TED2021-130566B-I00 from the Ministry of Science and Innovation of the Government of Spain.}, 
	Ronan Egan \thanks{School of Mathematical Sciences, Dublin City University, Ireland. ronan.egan@dcu.ie}, 
	Ferruh Ozbudak \thanks{Ferruh \"{O}zbudak is with Faculty of Engineering and Natural Sciences, Sabanc{\i} University, 34956, Istanbul,  E-mail: ferruh.ozbudak@sabanciuniv.edu}, 
	Patrick Sol\'e \thanks{{ Patrick Sol\'e is with I2M (Aix Marseille Univ, CNRS, Centrale Marseille), Marseilles, France. sole@enst.fr}}
}
\date{}
\begin{document}
	\maketitle
	\begin{abstract}
		We explore a notion of bent sequence attached to the data consisting of an Hadamard matrix of order $n$  defined over the complex $q^{th}$ roots of unity, an eigenvalue of that matrix, and a Galois automorphism from the cyclotomic field of order $q.$ In particular we construct self-dual bent sequences for various $q\le 60$ and lengths $n\le 21.$
		Computational construction methods comprise the resolution of polynomial systems by Groebner bases and eigenspace computations. Infinite families can be constructed from regular Hadamard matrices, Bush-type Hadamard matrices, and generalized Boolean bent functions.
		As an application, we estimate the covering radius of the code attached to that matrix over $\Z_q.$ We derive a lower bound on that quantity
		for the Chinese Euclidean metric when bent sequences exist. We give the Euclidean distance spectrum, and bound above the covering radius of an attached spherical code, depending on its strength as a spherical design.
		
	\end{abstract}
	\textbf{Keywords:} Hadamard matrices, Butson matrices, covering radius, bent sequences, spherical codes\\ \ \
{\textbf MSC (2020):} Primary 05 B20, Secondary 05E99
	\section{Introduction}\label{Introduction}
Bent sequences are important combinatorial objects that are relevant to difference sets, Boolean functions,  strongly regular graphs, and symmetric cryptography \cite{M}. They are classically defined in relation to the Sylvester matrix, the matrix of the Walsh-Hadamard transform.
Recently a new notion of bent sequences, motivated by a cryptographic problem (PUFs=Physically Unclonable Functions), appeared, that is defined for any Hadamard matrix \cite{rioul}. Constructions and properties of these sequences in the self-dual case were investigated in \cite{S+}.
 The advantage of self-dual bent sequences over unrestricted bent sequences
 is that they allow linear algebra to bear on the problem. Note that every bent sequence is self-dual for some matrix \cite{ML}.
 The extension of this notion to
complex Hadamard matrix in the sense of Turyn, that is to say with entries in $\Omega_4=\{\pm 1, \pm i\}$, was done in \cite{ML}. A self-dual bent sequence $X$ attached to a matrix $H$ is then an eigenvector of $H$ for an eigenvalue $\lambda$ of $H$ (some Gaussian integer) that have its entries in $\Omega_4.$ Thus $HX=\lambda X.$ A new twist in the definition
of self-dual bent sequences was given in \cite{RI} were it is suggested to replace the above equation by $HX=\lambda \overline{X},$ where the bar denotes complex conjugation. The motivation for that change was to obtain more bent sequences, in particular in relation with generalized bent functions.

In the present paper, we extend the notion of self-dual bent sequences in two directions. First, we consider Butson Hadamard matrices, that is to say Hadamard matrices with values in the complex roots of unity of a given order. Next, we replace the notion of conjugate by that of multiplier, that is a bijective map of the form $z\mapsto z^k$ with $(k,q)=1$ which preserves globally the complex $q^{th}$ roots of unity. When $k=1$ this new definition reduces to that of \cite{ML,S+}. When $k=q-1$, we recover the definition of \cite{RI}. This change of definition affects the notion of a strong group, a matrix group that preserves the set of self-dual bent sequences.
We extend the computational techniques (linear algebra and Groebner bases) from \cite{ML,S+} to this new setting. This allows us to obtain many examples of self-dual bent sequences from the matrices of the database \cite{W}. General constructions of Butson Hadamard matrices with designed self-dual bent sequences are then derived from, respectively, Kronecker products of Butson matrices, regular Butson matrices, Bush type Hadamard  matrices, and generalized Walsh Hadamard
matrices. In the last case we use group invariant Butson matrices in the sense of \cite{TD,Sch19}. Strong necessary existence conditions for Butson matrices admitting self-dual bent sequences with trivial multiplier are given. In a second part of the article, we turn our attention towards Butson-Hadamard codes, and their metric properties with respect to covering and packing either a finite set of strings or the Euclidean sphere. While the packing properties of the code with respect to the homogeneous and other metrics have been investigated by various
authors \cite{T,G}, we focus on the Chinese Euclidean metric and the covering properties in connection with bent sequences. The Chinese Euclidean metric, which is first mentioned in \cite{CE}, should not be confused with that in \cite{JL}.
We also associate with every such code of length $n$ a spherical code in dimension $2n.$ This code is shown to be optimal for the minimum distance in the case of $q=4,$ as meeting the Levenshtein bound. Pending on some properties of the Hadamard matrix, we can obtain spherical designs of strength one or two. Spherical designs are finite sets of points on the unit Euclidean sphere which occur naturally in relation to the ``cubature formulae'' of numerical analysis,
 and can be thought of as analogues of combinatorial designs \cite{DGS}. In \cite{PD}, the existence of spherical designs is established, providing a solid foundation for their theoretical exploration. Moreover, the utilization of spherical designs in motivating the definition of conformal $t$-designs is demonstrated in \cite{GH}. Building spherical designs from Butson Hadamard matrices seems to be a new approach. It enters here as a way to use upper bounds on the covering radius of spherical designs \cite{FL,S} to control the covering radius of the said spherical code.

The material is arranged as follows. Section 2 collects the definitions needed for the rest of the paper. Section 3 defines and constructs the strong automorphism group of a Butson matrix with a given multiplier. Section 4 studies computational methods and results. Section 5 gives four general constructions of Butson Hadamard matrices with designed self-dual bent sequences. Section 6 derives a necessary condition for the existence of Butson matrices admitting a self-dual bent sequence. Section 7 derives the Euclidean distance spectrum of Butson Hadamard codes. Section 8 contains the lower and upper bounds on the covering radius of Butson Hadamard codes and attending spherical codes. Section 9 concludes the article.
	\section{Preliminaries}\label{Pre}
	{\de A complex {\bf Hadamard} matrix of order $n$ is {an}  $n \times n$ matrix $H$ {with entries in the complex unit  satisfying the equation $HH^*=nI,$ where the * denotes the transpose conjugate. If all of its elements are in $\Omega_q=\left \{ z\in C\mid z^q=1 \right \},$ for some integer $q, $ then $H$ is said to be {\bf Butson} type, and we write  $H\in BH(n,q).$ A matrix $H \in BH(n,q)$ is in {\bf dephased} form when both first row and first columns only contain ones.}

{\de Two Hadamard matrices $H$ and $K$ of $BH(n,q)$ are called {\bf equivalent} if there are two permutation matrices $P_1$ and $P_2$ and two diagonal matrices $D_1$ and $D_2$ with entries in $\Omega_q$ such that
$$H=P_1D_1 K P_2D_2.$$
Every matrix of $BH(n,q)$ is equivalent to its dephased form.}
	{\de Given $H\in BH(n,q)$ the sequence $X\in\Omega_q^n$ is a {\bf bent sequence} if there is $Y\in\Omega_q^n$ such that  $HX=\lambda Y$ for some $\lambda \in \Z[\zeta_q].$ 
}
	{\de \label{definition.self-dual.bent} For any $k$ coprime with $q$ we define the {\bf multiplier} $\mu_k:\Omega_q \longrightarrow\Omega_q $ by the rule $\mu_k(z)=z^k.$
		Now a self-dual bent sequence attached to $H\in BH(n,q)$ is defined as $X\in\Omega_q^n$ such that $$HX=\lambda\mu_k(X),$$ where $\lambda$ is an element of $\Z[\zeta_q].$ (When $k=1$ it is also an eigenvalue of $H$).
		When $k=1,$ this is the definition of self-dual bent sequence in \cite{ML}.
		When $k=q-1,$ this is the definition of self-dual bent sequence in \cite{RI}. }\\ \ \

{\re \noindent
\begin{enumerate}
\item[(1)] Any multiplier $\mu_k$ can be extended into a Galois automorphism of the cyclotomic field of index $q.$ Writing an element of that field as
$a=\sum\limits_{i=0}^{q-1} a_i \zeta^i$ for rationals $a_i$'s, we define $\mu_k(a)=\sum\limits_{i=0}^{q-1} a_i \zeta^{ki}.$ In fact all Galois automorphisms are of that form \cite{L}.
They thus satisfy the following three properties.
\begin{enumerate}[(i)]
\item If $x\in \Q,$ then $\mu_k(x)=x.$
\item $\forall x,y \in \Q(\zeta_q),\, \mu_k(x+y)=\mu_k(x)+\mu_k(y).$
\item $\forall x,y \in \Q(\zeta_q),\, \mu_k(xy)=\mu_k(x)\mu_k(y).$
\end{enumerate}
\item[(2)] If $X$ is a bent sequence for $H\in BH(n,q)$ with $HX=\lambda Y,$ then $Y$ is a self-dual bent sequence with trivial multiplier for the matrix $HD,$ where $D$ is a diagonal matrix such that $X=DY.$ Note that the diagonal of $D$ lies in $\Omega_q,$ which implies that $HD \in BH(n,q).$
	\end{enumerate}}

	{\de The {\bf Chinese Euclidean weight} $w_{CE}(x)$ of a vector $x\in \mathbb{Z}_q^n$ is
		$$\sum_{i=1}^{n}\left[ 2-2\cos(\frac{2\pi x_i}{q} ) \right],  $$}
and the {\bf Chinese Euclidean distance} between the codewords $u$ and $v\in \mathbb{Z}_q^n$ is defined as $$d_{CE}(u,v)=w_{CE}(u-v).$$

	{\de A Butson matrix $H\in BH(n,q)$ is conveniently represented in logarithmic form, that is, the matrix $H=\left [ \zeta_q^{\varphi_{i,j}} \right ]_{i,j=1}^{n}$ is represented by the matrix $L(H)=\left [ \varphi_{i,j} \bmod q  \right ]_{i,j=1}^{n}$ with the convention that $L_{i,j}\in \mathbb{Z}_q$ for all $i,j\in \{1,2,\dots,n\}.$ Given $H\in BH(n,q),$ we denote by $F_H$ the $\mathbb{Z}_q$-code of length $n$ consisting of the rows of $L(H),$ and by $C_H=\cup_{\alpha\in \mathbb{Z}_q}(F_H+\alpha \textbf{1})$ where \textbf{1} denotes the all-one vector. The code $C_H\subseteq \mathbb{Z}_q^n$ is called a {\bf Butson Hadamard} code. Any subset of $\Omega_q^n$ for some integer $n$ is called a {\bf polyphase} code, following \cite{EZ}. }
	
	{\de Let $C_H$ be a Butson Hadamard code of length $n$ over $\mathbb{Z}_q$ coming from $H \in BH(n,q)$. The {\bf covering radius} of $C_H$ with Chinese Euclidean distance $d_{CE}$ is defined as: $$r_{CE}(C_H)=\max\{d_{CE}(x,C_H)]\mid x \in \mathbb{Z}_q^n\}=\max\limits_{x \in \mathbb{Z}_q^n}\min\limits_{y \in C_H}d_{CE}(x,y).$$
	}
\section{Automorphism group}

Denote the set of $n \times n$ monomial matrices with nonzero entries in $\Omega_{q}$ by $M(n,q)$. The set $M(n,q)$ is a group generated by the set $P_{n}$ of all permutation matrices of order $n$, and $\Delta(n,q)$, the group of all diagonal matrices of order $n$ with entries in $\Omega_{q}$. The group of all pairs of matrices from $M(n,q)$ acts on the set $\mathrm{BH}(n,q)$. The action is given by
\[
(P,Q)H = PHQ^{\ast}
\]
for all $H \in \mathrm{BH}(n,q)$. The orbit of $H$ under this action is the equivalence class of $H$. The stabilizer of $H$ under this action, i.e., the set of all $(P,Q) \in M(n,q)^{2}$ such that
\[
PHQ^{\ast} = H
\]
is called the {\bf automorphism group} of $H$, and is denoted $Aut(H)$. The subgroup of $Aut(H)$ comprised of pairs of matrices in $P_{n}$ is called the {\bf permutation automorphism group} of $H$, denoted $PAut(H)$.

We define the {\bf strong automorphism group} with trivial multiplier of $H$ to be the subgroup $SAut(H,1) \leq Aut(H)$ comprised of pairs of the form $(P,P)$. It follows from the definition of an automorphism that $(P,P) \in SAut(H,1)$ if and only if $PH = HP$. Hence $SAut(H,1)$ is in bijection with the centralizer of $H$ in $M(n,q)$.

Calculating the automorphism groups of $H \in \mathrm{BH}(n,q)$ is an onerous task in general. The most efficient approaches all involve a similar idea - constructing an appropriate larger $(0,1)$-matrix and using graph theoretic tools such as {\sc nauty} \cite{N} which are highly efficient. In the following section we condense the relevant details on computing the automorphism group of a pairwise combinatorial design given in \cite[Chapter 9]{deLF} down for just the purpose of handling a Butson Hadamard matrix.

\subsection{The expanded and associated designs}

{\de Let $H \in \mathrm{BH}(n,q)$. The {\bf expanded design} $\mathcal{E}_H$ of $H$ is the $nq \times nq$ matrix of the form
%\begin{equation}\label{exp des}
\[
\mathcal{E}_{H} = [\zeta_{q}^{i}H\zeta_{q}^{j}]_{0 \leq i,j \leq q-1}.
\]}
%\end{equation}
The {\bf associated design} $\mathrm{A}_{H}$
of $H$ is obtained from the expanded design by replacing each of its non-identity
entries with $0$.
\begin{ex}\label{expanded BH}
\upshape{Let $\zeta = \zeta_{3}$. If
\[H = {\small \renewcommand{\arraycolsep}{.1cm}\left[\begin{array}{rrr}
1 & 1 & 1 \\
1 & \zeta & \zeta^2 \\
1 & \zeta^2 & \zeta \end{array}\right]}
\]
then $\mathcal{E}_{H}$ and $\mathrm{A}_{H}$ are}
\[
{\footnotesize
\renewcommand{\arraycolsep}{.1cm}\left[\begin{array}{ccccccccc}
1 & 1 & 1 & \zeta & \zeta & \zeta & \zeta^2 & \zeta^2 & \zeta^2 \\
1 & \zeta & \zeta^2 & \zeta & \zeta^2 & 1 & \zeta^2 & 1 & \zeta \\
1 & \zeta^2 & \zeta & \zeta & 1 & \zeta^2 & \zeta^2 & \zeta & 1 \\
\zeta & \zeta & \zeta & \zeta^2 & \zeta^2 & \zeta^2 & 1 & 1 & 1 \\
\zeta & \zeta^2 & 1 & \zeta^2 & 1 & \zeta & 1 & \zeta & \zeta^2 \\
\zeta & 1 & \zeta^2 & \zeta^2 & \zeta & 1 & 1 & \zeta^2 & \zeta \\
\zeta^2 & \zeta^2 & \zeta^2 & 1 & 1 & 1 & \zeta & \zeta & \zeta \\
\zeta^2 & 1 & \zeta & 1 & \zeta & \zeta^2 & \zeta & \zeta^2 & 1 \\
\zeta^2 & \zeta & 1 & 1 & \zeta^2 & \zeta & \zeta & 1 & \zeta^2 \\
\end{array}\right]}~\textit{and}~
%\]
%\textit{and}
%\[
{\footnotesize
\renewcommand{\arraycolsep}{.1cm}\left[\begin{array}{ccccccccc}
1 & 1 & 1 & 0 & 0 & 0 & 0 & 0 & 0 \\
1 & 0 & 0 & 0 & 0 & 1 & 0 & 1 & 0 \\
1 & 0 & 0 & 0 & 1 & 0 & 0 & 0 & 1 \\
0 & 0 & 0 & 0 & 0 & 0 & 1 & 1 & 1 \\
0 & 0 & 1 & 0 & 1 & 0 & 1 & 0 & 0 \\
0 & 1 & 0 & 0 & 0 & 1 & 1 & 0 & 0 \\
0 & 0 & 0 & 1 & 1 & 1 & 0 & 0 & 0 \\
0 & 1 & 0 & 1 & 0 & 0 & 0 & 0 & 1 \\
0 & 0 & 1 & 1 & 0 & 0 & 0 & 1 & 0 \\
\end{array}\right]}
\]
respectively.
%Writing out the associated design adds nothing!
\end{ex}

The following is a special case of \cite[Theorem 9.6.12]{deLF}.
\begin{thm}\label{Iso Exp}
%If $H$ is a$\GH(n,\Crm_p)$ then
$Aut(H) \cong PAut(\mathcal{E}_{H})$.
 %\cap\Crm_{\mathrm{Perm}(np^m)^2}(Z_{\Crm_{p}^{m}})$.
\end{thm}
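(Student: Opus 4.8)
The plan is to produce an explicit isomorphism between $Aut(H)$ and $PAut(\mathcal{E}_H)$ by showing that each monomial automorphism $(P,Q) \in Aut(H)$ lifts to a genuine permutation of the rows and columns of the expanded matrix $\mathcal{E}_H$, and conversely that every permutation automorphism of $\mathcal{E}_H$ has a block structure forcing it to come from such a pair. First I would set up notation: index the rows (and columns) of $\mathcal{E}_H$ by pairs $(i,a)$ with $i \in \{1,\dots,n\}$ and $a \in \mathbb{Z}_q$, so that the $((i,a),(j,b))$ entry of $\mathcal{E}_H$ is $\zeta_q^{a+b} H_{i,j}$. A diagonal matrix $D_1 \in \Delta(n,q)$ with diagonal entries $\zeta_q^{c_i}$ acts on row $(i,a)$ by sending it to row $(i, a + c_i)$; a permutation matrix acts on the first coordinate; and left multiplication by a monomial matrix $P = P_0 D_1$ (with $P_0$ a permutation and $D_1$ diagonal) therefore induces a permutation $\pi_P$ of the index set. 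The key computation is that if $PHQ^* = H$, then the induced pair $(\pi_P, \pi_{Q^*})$ of permutations of the $nq$-element index set satisfies $\pi_P \mathcal{E}_H \pi_{Q^*}^{\top} = \mathcal{E}_H$, i.e. $(\pi_P, \pi_{Q^*}) \in PAut(\mathcal{E}_H)$. This follows because $\mathcal{E}_H$ is built so that the $\Omega_q$-scaling action on $H$ is absorbed into translation of the second index, which the permutation tracks exactly.

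Next I would check that the map $(P,Q) \mapsto (\pi_P, \pi_{Q^*})$ is an injective group homomorphism: injectivity holds because the action of $\pi_P$ on the blocks $\{(i,a): a \in \mathbb{Z}_q\}$ recovers both the underlying permutation $P_0$ (from how the blocks are permuted) and the diagonal $D_1$ (from the within-block shift), and the homomorphism property is immediate from functoriality of ``induced permutation.'' The more delicate direction is surjectivity onto $PAut(\mathcal{E}_H)$. Here I would argue that any $\sigma \in PAut(\mathcal{E}_H)$ must respect the partition of indices into the $n$ blocks of size $q$: this is where the structure of $\mathcal{E}_H$ (equivalently, of the associated design $\mathrm{A}_H$) is used — two row-indices lie in the same block if and only if some combinatorial invariant visible in $\mathcal{E}_H$ (for instance, the pattern of positions where their rows agree, or the fact that rows in a common block are $\Omega_q$-scalar multiples of each other after the appropriate identification) forces it. Once $\sigma$ is known to permute blocks, within-block behaviour must be a uniform $\mathbb{Z}_q$-translation (because the columns are organized the same way and the $(i,a)$-row is the $\zeta_q^a$-scaling of the $(i,0)$-row), so $\sigma = \pi_P$ for a unique monomial $P$, and similarly on the column side, and $PHQ^* = H$ follows by reading off one block.

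The main obstacle is precisely the block-preservation step in the surjectivity argument: a priori a permutation automorphism of $\mathcal{E}_H$ could mix rows from different blocks, and ruling this out requires identifying an intrinsic feature of $\mathcal{E}_H$ that pins down the blocks. The cleanest route is to pass to the associated design $\mathrm{A}_H$ (the $(0,1)$-matrix recording the positions of $1$'s), observe that $PAut(\mathcal{E}_H) = PAut(\mathrm{A}_H)$ — since recoloring the nonzero entries of $\mathcal{E}_H$ by their $\Omega_q$-values is canonical and is preserved by any permutation automorphism — and then invoke the cited general theory of pairwise combinatorial designs from \cite[Chapter 9]{deLF}, of which this is the stated special case \cite[Theorem 9.6.12]{deLF}: the expanded/associated design construction there is designed exactly so that block structure is forced. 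Thus the honest content of the proof is the verification that the Butson matrix $H$, with its $\Omega_q$ scaling action, fits the template of that general theorem; once the dictionary (monomial matrices $\leftrightarrow$ translations of the second index, equivalence action $\leftrightarrow$ row/column permutations of $\mathcal{E}_H$) is in place, the isomorphism $Aut(H) \cong PAut(\mathcal{E}_H)$ is a direct consequence, and I would present it by exhibiting the two mutually inverse maps described above and checking they are homomorphisms.
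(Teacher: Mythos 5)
The paper gives no proof of this theorem: it is stated as a special case of \cite[Theorem 9.6.12]{deLF}, and the surrounding text only records the explicit isomorphism $\Theta=(\theta^{(1)},\theta^{(2)})$ built from the matrices $S_{\omega},T_{\omega}$ of (\ref{S and T}). Your proposal therefore supplies more than the paper does, and its skeleton is the right one: the dictionary you set up (a monomial matrix acting as a block permutation composed with within-block translations of the $\mathbb{Z}_q$-index, giving an injective homomorphism $(P,Q)\mapsto(\pi_P,\pi_{Q^{*}})$ into $PAut(\mathcal{E}_H)$) is exactly the content of $\Theta$. For the surjectivity step, the invariant that actually pins down the blocks is the second one you mention: rows $(i,a)$ and $(i',a')$ of $\mathcal{E}_H$ are proportional with ratio in $\Omega_q$ if and only if $i=i'$, because distinct rows of a Hadamard matrix are orthogonal and hence never proportional; proportionality with a fixed ratio is preserved by any column permutation, so every element of $PAut(\mathcal{E}_H)$ permutes the blocks and acts on each block by a $\mathbb{Z}_q$-translation, hence equals some $\pi_P$. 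Your first suggested invariant (the agreement pattern of two rows) does not obviously separate the blocks and should be dropped in favour of the proportionality argument.

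One claim in your ``cleanest route'' is false and must not be used: $PAut(\mathcal{E}_H)=PAut(\mathrm{A}_{H})$ fails in general, since $\mathrm{A}_H$ forgets all non-identity entries and its permutation automorphism group is typically strictly larger. Already for $H=[1]\in BH(1,q)$ one has $|PAut(\mathcal{E}_H)|=|Aut(H)|=q$, whereas $\mathrm{A}_H$ is a $q\times q$ permutation matrix and $|PAut(\mathrm{A}_H)|=q!$. The correct relation is the one the paper quotes as \cite[Theorem 9.8.5]{deLF}, namely $\Theta(Aut(H))=PAut(\mathrm{A}_H)\cap C_{P_{nq}\times P_{nq}}(Z_q)$: the intersection with the centralizer of $Z_q$ is essential, and it is precisely the information that your direct proportionality argument recovers from $\mathcal{E}_H$ but that $\mathrm{A}_H$ alone cannot see. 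With that correction, and with the proportionality argument written out, your outline is a complete, self-contained proof of the statement the paper only cites.
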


The isomorphism of Theorem~\ref{Iso Exp} is described in detail in
\cite[Section 9.6]{deLF}, we summarize the main points here.
For any $X \in M(n,q)$ there are unique disjoint $(0,1)$-matrices $X_{\omega}$
such that $X = \sum_{\omega \in \Omega_{q}}\omega{X_{\omega}}$.  Let
\begin{equation}\label{S and T}
S_{\omega} = [\delta_{\omega \beta}^{\alpha}]_{\alpha,\beta \in \Omega_{q}} \quad \text{ and }
\quad T_{\omega} = [\delta_{\beta}^{\alpha \omega}]_{\alpha,\beta \in \Omega_{q}}.
\end{equation}
Let $\theta^{(1)}(X) = \sum_{\omega \in \Omega_{q}}T_{\omega}\otimes X_{\omega}$ and
$\theta^{(2)}(X) = \sum_{\omega \in \Omega_{q}}S_{\omega}\otimes X_{\omega}$.
Define the map $\Theta : M(n,q)^{2} \rightarrow P_{nq}^{2}$
by $\Theta: (X,Y) \mapsto (\theta^{(1)}(X),\theta^{(2)}(Y))$. Let $Z_{q} = \{(S_{\omega} \otimes I_{n} , T_{\mu} \otimes I_{n}) \, \mid \, \omega,\mu \in \Omega_{q}\}$. Then by \cite[Theorem 9.6.7]{deLF}, $\Theta(M(n,q)^{2}) = C_{P_{nq} \times P_{nq}}(Z_{q})$, the centralizer of $Z_{q}$ in $P_{nq}\times P_{nq}$. The restriction $\Theta : Aut(H) \rightarrow PAut(\mathcal{E}_H)$ is an isomorphism as in Theorem \ref{Iso Exp}.
This leads to the following useful result.
\begin{thm}[Theorem 9.8.5, \cite{deLF}]
$\Theta(Aut(H)) = PAut(A_{H})\cap C_{P_{nq} \times P_{nq}}(Z_{q})$.
\end{thm}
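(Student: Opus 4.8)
The plan is to leverage the two facts just recorded: the restriction $\Theta\colon Aut(H)\to PAut(\mathcal E_H)$ is a group isomorphism, and $\Theta(M(n,q)^2)=C_{P_{nq}\times P_{nq}}(Z_q)$. Together these give $\Theta(Aut(H))=PAut(\mathcal E_H)$ and $PAut(\mathcal E_H)\subseteq C_{P_{nq}\times P_{nq}}(Z_q)$. Since a pair of permutation matrices that fixes $\mathcal E_H$ necessarily fixes the $(0,1)$-matrix recording the positions of any prescribed entry-value of $\mathcal E_H$ — in particular the positions of the entry $1$, i.e.\ $A_H$ — we also have $PAut(\mathcal E_H)\subseteq PAut(A_H)$. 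Hence the inclusion $\Theta(Aut(H))\subseteq PAut(A_H)\cap C_{P_{nq}\times P_{nq}}(Z_q)$ is automatic, and the whole statement reduces to the reverse inclusion $PAut(A_H)\cap C_{P_{nq}\times P_{nq}}(Z_q)\subseteq PAut(\mathcal E_H)$.

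To prove this, I would first decompose $\mathcal E_H=\sum_{\omega\in\Omega_q}\omega M_\omega$ into its disjoint $\Omega_q$-homogeneous $(0,1)$-components, so that $M_1=A_H$ by the definition of the associated design. The key structural identity is
\[
(S_\omega\otimes I_n)A_H=M_\omega\qquad\text{for every }\omega\in\Omega_q.
\]
This follows from the block form $\mathcal E_H=[\zeta_q^{\,i+j}H]_{0\le i,j\le q-1}$: left multiplication by $S_{\zeta_q^{s}}\otimes I_n$ shifts the block-row index by $s$, hence multiplies $\mathcal E_H$ globally by the scalar $\zeta_q^{-s}$; expanding $\mathcal E_H=\zeta_q^{s}(S_{\zeta_q^{s}}\otimes I_n)\mathcal E_H$ into homogeneous components and invoking uniqueness of the decomposition $X=\sum_{\omega}\omega X_\omega$ yields the identity. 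One could just as well route the argument through right multiplication by the $T_\mu\otimes I_n$; only one side is needed.

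Now let $(\pi,\sigma)\in PAut(A_H)\cap C_{P_{nq}\times P_{nq}}(Z_q)$. The centralizer condition says exactly that $\pi$ commutes with every $S_\omega\otimes I_n$ (and $\sigma$ with every $T_\mu\otimes I_n$), and $\sigma^{\ast}=\sigma^{-1}=\sigma^{\top}$ because $\sigma$ is a $(0,1)$-permutation matrix. Then, using $\pi A_H\sigma^{\ast}=A_H$,
\begin{align*}
\pi\,\mathcal E_H\,\sigma^{\ast}
&=\sum_{\omega\in\Omega_q}\omega\,\pi M_\omega\sigma^{\ast}
=\sum_{\omega\in\Omega_q}\omega\,\pi(S_\omega\otimes I_n)A_H\sigma^{\ast}\\
&=\sum_{\omega\in\Omega_q}\omega\,(S_\omega\otimes I_n)\,\pi A_H\sigma^{\ast}
=\sum_{\omega\in\Omega_q}\omega\,(S_\omega\otimes I_n)A_H
=\mathcal E_H,
\end{align*}
so $(\pi,\sigma)\in PAut(\mathcal E_H)=\Theta(Aut(H))$, which completes the argument. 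This is, of course, simply the specialization to a single Butson Hadamard matrix of the proof of \cite[Theorem 9.8.5]{deLF}.

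The step I expect to be the main obstacle — or at least the only place requiring care — is nailing down the identity $(S_\omega\otimes I_n)A_H=M_\omega$: one must fix a consistent convention for how $S_\omega$ and $T_\mu$ index $\Omega_q$, check that multiplication by these matrices acts on $\mathcal E_H$ as a pure global scalar, and then read off the homogeneous layers via the uniqueness of $X=\sum_{\omega}\omega X_\omega$. Once that identity is in hand, both inclusions are purely formal.
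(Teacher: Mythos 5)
Your proposal is correct. Note, however, that the paper itself offers no proof of this statement: it is imported verbatim as \cite[Theorem 9.8.5]{deLF}, so there is nothing in the text to compare against except the surrounding scaffolding (Theorem~\ref{Iso Exp} and the identity $\Theta(M(n,q)^2)=C_{P_{nq}\times P_{nq}}(Z_q)$), both of which you use exactly as the paper makes them available. Your reduction is sound: $\Theta(Aut(H))=PAut(\mathcal{E}_H)$ gives the forward inclusion once you observe that a permutation pair fixing $\mathcal{E}_H$ fixes the support of each entry value, hence fixes $A_H=M_1$. For the reverse inclusion, the key identity $(S_\omega\otimes I_n)A_H=M_\omega$ does hold with the paper's conventions: with $\mathcal{E}_H=[\zeta_q^{i+j}H]$ one computes $(S_\omega\otimes I_n)\mathcal{E}_H=\omega^{-1}\mathcal{E}_H$, and uniqueness of the disjoint decomposition $\mathcal{E}_H=\sum_\nu \nu M_\nu$ yields $M_{\omega\nu}=(S_\omega\otimes I_n)M_\nu$, hence $M_\omega=(S_\omega\otimes I_n)M_1$. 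The centralizer of $Z_q$ in the direct product is indeed componentwise (since $Z_q$ contains all pairs $(S_\omega\otimes I_n,T_\mu\otimes I_n)$, including those with one coordinate trivial), so $\pi$ commutes with every $S_\omega\otimes I_n$ and your telescoping computation $\pi\mathcal{E}_H\sigma^{\ast}=\mathcal{E}_H$ goes through; only the left-hand family is needed, as you say. This is essentially the de Launey--Flannery argument specialized to a single Butson matrix, and it is a useful addition since the paper leaves the statement as a bare citation.
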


Treating $A_{H}$ as the incidence structure of points and blocks, built in {\sc Magma} \cite{Magma} functions construct the automorphism group $PAut(A_{H})$.

\subsection{Automorphisms of digraphs}

Inspired by \cite[\S 3.1]{ML}, we attach a digraph $G(H)$ on $2qn$ vertices. 

%\begin{itemize}\marginpar{\tiny \textcolor{blue}{Andrés: I think we should remove statements 1 to 3. Repetitive}}
%\item The row $t$ of $H$ yields $q$ row vertices $r(t,x)$ and $q$ rows arcs $(r(t,x),\zeta_q r(t,x))$ for $x \in \Omega_q$
%\item The column $s$ of $H$ yields $q$ column vertices $c(t,x)$ and $q$ column arcs $(c(s,x),\zeta_q c(s,x))$ for $x \in \Omega_q$
%\item The entry $(t,s)$ yields $q$ arcs $(r(t,x), c(t,H_{rs}x))$ for $x \in \Omega_q$
%\end{itemize}

\begin{enumerate}[(1)]
    \item For each row $t$ of $H$, there are $q$ row vertices $r(t,x)$ where $x \in \Omega_{q}$, and $q$ arcs $(r(t,x),r(t,\zeta_{q}x))$ for each $x \in \Omega_{q}$.
    \item For each column $s$ of $H$, there are $q$ column vertices $c(s,x)$ where $x \in \Omega_{q}$, and $q$ arcs $(c(s,x),c(s,\zeta_{q}x))$ for each $x \in \Omega_{q}$.
    \item The entry $H_{ts}$ yields $q$ arcs $(r(t,x),c(s,H_{ts}x))$ for each $x \in \Omega_{q}$.
\end{enumerate}

The analogue of \cite[Corollary 1]{ML} is stated below.  The proof is given in the proof of Proposition \ref{prop_Saut}.

{\prop \label{prop_aut} The automorphism group $Aut(H)$ is isomorphic to the automorphism group of $G(H).$ }

We now consider the strong automorphism group of $C.$
%Given a multiplier $\mu_k$ one may define its fixed set as
%$$S_k=\{ z \in \Omega_q \mid \mu_k(z)=z\}=\{ \zeta_q^j \mid 1\le j\le k-1 \&\, %q | (k-1)j\}.$$
%For instance, if $q=6$ and $k=5$  we find $S_5=\{\pm 1\}.$
 %Let $D_k$ denote the group of diagonal matrices  with coefficients  in $S_k.$
%Call $M_{n,k}=<P_n,D_k>$ the group of matrices generated by the two groups  %$P_n$ and $D_k.$ Note that $$\forall M \in M_{n,k},\, \mu_k(M)=M. $$
Let $k$ be a fixed integer coprime to $q$. Define the {\bf strong group with multiplier} $\mu_{k}$, to be %the group of matrices in $M(n,q)$ satisfying the following property  strong 
$$ SAut(H,k)=\{M \in M(n,q) \mid \mu_k(M)H=HM\}.$$

{\prop The set of self-dual bent sequences for $H$ with multiplier $\mu_k$ is preserved globally under the action of $ SAut(H,k).$}

\begin{proof}

Let $X$ be such a sequence. Thus $HX=\lambda \mu_k(X)$ for some $\lambda \in \Z[\zeta_q].$ 
Multiplying both sides by $\mu_k(M)$ we see by using the relation $\mu_k(M)H=HM$ that $Y=MX$ is also a self-dual bent sequence for $H$ with multiplier $\mu_k.$ Indeed
$$ \mu_k(M)H X=HM X=\lambda \mu_k(M)\mu_k(X)=\lambda \mu_k(MX).$$
Observe that writing $\mu_k(M)\mu_k(X)=\mu_k(MX)$ uses the fact that $\mu_k$ can be extended to an element of the Galois group of $\Q(\zeta_q)/\Q.$
\end{proof}

A graph-theoretic  algorithm to construct $ SAut(H,k)$ in the case $k=1$ can be immediately extended from the $BH(n,4)$ case in \cite{ML}. Define a modified digraph $G'(H)$ as follows.
We connect every pair $r(s,x)$ and $c(s,x)$ for $s=1,\dots,n$ and $x$ ranging over $\Omega_q$ by a path a length two via an extra vertex $I_{s,x}$ say. The following proposition is an analogue of
\cite[Corollary 2]{ML}. %is stated \textcolor{blue}{below} \sout{without proof} . 
{Moreover, the similar result hold  for any $k$ with $1 \leq k\leq q-1$ by mean of  the digraph $G^k(H)=(V^k,E^k)$ where $V^k=V\cup \{I(s,x)\,\colon \, 1\leq s\leq n, x\in \Omega_q \}$ and $E^k=E\cup \{\big(r(s,y), I(s,x)\big),\, \big(I(s,x), c(s,x)\big)\colon\, 1\leq s\leq n, x\in \Omega_q, y=x^k\}$.} 

}%\marginpar{\tiny Does this definitely hold for the more general definition of strong automorphism group we now use? The proof of \cite[Corollary 2]{ML} is for when $SAut(H)$ is defined as I have above, but the definition of Proposition 3.2 is surely larger? In any case, I think this result needs to have a proof in this more general setting.\textcolor{blue}{Andrés: I think it is done.}}

{\prop \label{prop_Saut} If $k=1,$ then the strong automorphism group of $H$ is isomorphic to the automorphism group of $G'(H).$ {In general, $SAut(H,k)$ is isomorphic to the automorphism group of $G^k(H)$ for $1\leq k\leq q-1$.}}

\begin{proof}
    Given $H\in BH(n,q)$, $G(H)=(V,E)$ denotes its associated digraph. If $f\colon V\rightarrow V$ is an automorphism of $G(H)$, by definition $f$ is a 
   $1$-to-$1$ map such that  $(f(v_1),f(v_2))\in E$  iff $(v_1,v_2)\in E$. Looking at the out/in-degree of the vertices,  then $f$ has to stabilize the set of row/column vertices, respectively. Furthermore, $f$ maps cycles into cycles. Hence, the general form of $f$ is
   $$
   \begin{array}{c}
        f(r(t,x))=r(t',y_t\cdot x) \\
        f(c(s,x))=c(s',z_s\cdot x) 
   \end{array}$$
   with $y_t, z_s\in \Omega_q$ satisfying that $(f(v_1),f(v_2))\in E$  iff $(v_1,v_2)\in E$.
 Thus, for every automorphism $f$ we can define two matrices $P, Q\in M(n,q)$, 
    where $P=P_1D_1$ with $P_1$ the permutation matrix defined by $\pi_r(t)=t'$  and $D_1$ is the diagonal matrix with $t$-th entry equal to $\overline{y}_t$ (where $\overline{y}$ denotes the complex conjugation of $y$). Similarly, $Q=P_2D_2$ with $P_2$ the permutation matrix defined by $\pi_c(s)=s'$ and $D_2$ is the diagonal matrix with $s$-th entry equals to $\bar{z}_s$. Since $G(H)=G(PHQ^*)   $, it follows that $H=PHQ^*$. This proves Proposition \ref{prop_aut}.  
   
   {For $SAut(H,1)$ we consider the digraph $G'(H)=(V',E')$ defined above. If $f\colon V'\rightarrow V'$ an authomorphism of $G'(H)$ then,  by an argument on the out/in-degree of the vertices, $f$ also stabilizes the set of vertices $I(s,x)$. That is,
   $$f(I(s,x))=I(s',w_{s,x}\cdot 
   x)$$
   taking into account that $(f(r(s,x)),f(I(s,x)))$ and $(f(I(s,x)),f(c(s,x)))$ belong to $E'$, then
   we have $\pi_r(s)=\pi_c(s)$ and $y_s=w_{s,x}=z_s$. Therefore, $P=Q$ where $(P,Q)$ are the monomial matrices associated to $f$. This completes the proof of the first part.}

{In general, for $SAut(H,k)$ we consider the digraph $G^k(H)=(V^k,E^k)$. By a similar argument as for $k=1$, we conclude that 
$P=\mu_k(Q)$ where $PHQ^*=H$. As a consequence, $SAut(H,k)$ is isomorphic to the automorphism group of $G^k(H).$
     }  
\end{proof}

The automorphism group of $G'(H)$ can then be computed using standard graph theoretic programs like, e.g. Nauty \cite{N}. 

    \section{Computational  techniques}
    In this section we extend the computational construction methods of self-dual bent sequences of  \cite{ML,S+} to Butson matrices.
	\subsection{Groebner bases}
	The system $HX=\lambda\mu_k(X)$ with $X\in\Omega_q^n$ can be solved using Groebner bases, as it is equivalent to a polynomial system, which,
instead of being quadratic as in \cite{ML} is now of degree $q.$ %\marginpar{\tiny Isn't $k < q$ always?}
	More concretely, we can consider the following steps.
	\begin{enumerate}[(1)]
		\item Construct the ring $P$ of polynomial functions in $n$ variables $X_i,i=1,\dots,n.$
		\item Construct the linear constraints $HX=\lambda\mu_k(X).$
		\item Construct the constraints $\forall i\in \{1,2,\dots,n\}, X_i^q=1.$
		\item Compute a Groebner basis for the ideal $I$ of $P$ determined by constraints (ii) and (iii).
		\item Compute the solutions as the zeros determined by $I$.
	\end{enumerate}
	
{\bf Complexity:} As is well-known, the complexity of computing Groebner bases can be doubly exponential in the number of variables \cite{Grob}. However, for the system at hand, it might at most be simply exponential \cite{Go}.
	
	\subsection{Linear algebra}
	Consider the system $HX=\lambda\mu_k(X).$ Denote by $t$ the multiplicative order of $k$ modulo $q,$ implying $\mu_k^t=1.$ We can then obtain, by repeated application of $\mu_k$ the equations $$\mu_k^i(H)\mu_k^i(X)=\mu_k^i(\lambda)\mu_k^{i+1}(X), i=1,2,\dots,t-1.$$ By successive substitution, we get $$\prod_{i=0}^{t-1}\mu_k^{t-1-i}(H)X=\prod_{i=0}^{t-1}\mu_k^{t-1-i}(\lambda)X .$$
	From this equation, we can give an algorithm for the construction of self-dual bent sequences.
	
	\begin{enumerate}[(1)]
		\item Construct $H \in BH(n,q).$
		\item Compute $\prod_{i=0}^{t-1}\mu_k^{t-1-i}(H)$ and $\prod_{i=0}^{t-1}\mu_k^{t-1-i}(\lambda).$
		\item Compute a basis of the eigenspace associated to the eigenvalue $\prod_{i=0}^{t-1}\mu_k^{t-1-i}(\lambda)$ of $\prod_{i=0}^{t-1}\mu_k^{t-1-i}(H).$
		\item Let $B$ denote a matrix with rows such a basis of size $l\le n.$ Pick $B_l$ a $l$-by-$l$ submatrix of $B$ that is invertible, by the algorithm given below.
		\item For all $Z\in\Omega_q^l$ solve the system in $C$ given by $Z=CB_l.$
		\item Compute the remaining $n-l$ entries of $CB.$
		\item If these entries are in $\Omega_q$ declare $CB$ a self-dual bent sequence attached to $\prod_{i=0}^{t-1}\mu_k^{t-1-i}(H).$
		\item Sieve out the self-dual bent sequence that satisfies $HX=\lambda\mu_k(X).$
	\end{enumerate}
	
	To construct $B_l$ we apply a greedy algorithm. We construct the list $J$ of the indices of the columns of $B_l$ as follows.
	\begin{enumerate}[(i)]
		\item Initialize $J$ at $J=[1].$
		\item Given a column of index ${l}$ we compute the ranks $r$ and $r'$ of the submatrices of $B$ with $l$ rows and columns defined by the respective lists $J$ and $J'=Append(J,{l}).$
		\item If $r<r',$ then update $J:=J'.$
		\item Repeat until $\left | J \right |=rank(B). $
	\end{enumerate}

	{\re If the first column of $B$ is zero, step (i) does not make sense, but then there is no self-dual bent sequence in that situation, as all eigenvectors have first coordinate zero.\newline}

{\bf Complexity:} If we estimate at $O(n^3)$ the complexity of computation of an eigenspace basis, then the total complexity is $O(n^3+nkq^k)$ where $k$ is the dimension of the relevant eigenspace.
	
	\subsection{Numerical results}
	In Table 1, we apply the two previous construction techniques to the matrices of database \cite{W}. Each row corresponding to two values of $(n,q).$ Blanks mean that we cannot conclude, due to the computational burden.
	Explicit self-dual bent sequences can be seen at the bottom line of { \tt http://math.ahu.edu.cn/smj/list.htm}. In Table 2 we recall the distribution of matrices per classes of $ BH(n,q)$ in the said database. 
 \pagebreak
	\begin{longtable}{|c|c|c|c|c|c|c|c|}			
	\caption{Statistics of self-dual bent sequences of Butson matrices}\\
	\hline
	$n$ & $q$ & \#$\lambda$ & \#$X$ of per $\lambda$ &
	$n$ & $q$ & \#$\lambda$ & \#$X$ of per $\lambda$\\
	\hline
	2&2&0&0  &3&3&1&1\\   \hline
	4&2&2&2; 2   &4&4&3&8; 8; 4\\    \hline
	5&5&2&6; 10  &6&3&0&0\\   \hline
	6&4&0&0   &6&6&0&0 \\   \hline
	7&6&5&6; 4; 6; 0; 6   &7&7&2&9; 3  \\   \hline
	8&2&0&0  &8&4&0&0 \\   \hline
	8&4&0&0  &8&8&4&16; 16; 16; 16   \\   \hline
	8&20&0&0  &8&20&0&0   \\   \hline
	9&3&2&22; 12   &9&6&2&12; 4   \\   \hline
	9&9&2&48; 0   &9&10&2&2; 0   \\   \hline
	10&4&2&12; 12   &10&5&0&0   \\   \hline
	10&5&0&0  &10&6&0&0    \\   \hline
	10&6&0&0   &10&10&0&0   \\   \hline
	11&11&2&15; 5  &12&2&0&0    \\   \hline
	12&3&2&15; 1  &12&4&0&0   \\   \hline
	12&4&0&0   &12&6&2&96; 96   \\   \hline
	12&6&0&0   &12&12&4&72; 96; 72;108   \\  \hline
	12&36&3&   &13&6&0&0   \\ 	\hline
	13&6&0&0   &13&6&0&0   \\ 	\hline
	13&13&2&   &13&60&5&    \\  \hline
	14&4&0&0   &14&4&0&0     \\   \hline
	14&6&0&0   &14&7&0&0      \\   \hline
	14&10&0&0  &14&14&0&0      \\  \hline
	15&15&2&   &16&2&2&20; 20      \\  \hline
	16&2&2&8; 4   &16&2&2&0      \\   \hline
	16&2&2&0      &16&2&2&0      \\   \hline
	16&4&4&240; 240; 268; 180   &16&4&2&0   \\   \hline
	16&8&4&    &16&8&2&      \\   \hline
	16&16&4&    &21&3&0&0    \\   \hline
	\end{longtable}

	\begin{longtable}{|c|c|}
	\caption{Number of matrices in $BH(n,q)$ listed in the database \cite{W}}\\
	\hline
	\# & $BH(n,q)$\\
	\hline
     1 & \thead{$BH(2,2), BH(3,3), BH(4,2), BH(4,4), BH(5,5), BH(6,3), BH(6,4),BH(6,6), BH(7,6),$\\$BH(7,7), BH(8,2), BH(8,8), BH(9,3), BH(9,6), BH(9,9), BH(9,10), BH(10,4), BH(10,10),$\\$ BH(11,11), BH(12,2), BH(12,3), BH(12,12), BH(12,36), BH(13,13), BH(13,60)$\\} \\
     \hline
     2 & \thead{$BH(8,4), BH(8,20), BH(10,5), BH(10,6), BH(12,4), BH(12,6), BH(14,4), BH(16,4),$\\$ BH(16,8)$}\\
     \hline
     3 & $BH(13,6)$ \\
     \hline
     5 & $BH(16,2)$\\
     \hline
		\end{longtable} %\marginpar{\tiny We should be clear that this table is not reporting on a complete classification of Butson matrices at these orders, it's just what's available from a particular database.THAT was said BEFORE}
			
\section{General constructions}
\subsection{Kronecker product}
If we keep the field of definition and the multiplier then we can take the Kronecker product of Butson matrices with self-dual bent sequences and
obtain another Butson matrix with a self-dual bent sequence, as the next result shows.

{\prop Let $H$ (resp. $K$) be a Butson matrix in $BH(n,q)$ (resp. $BH(m,q)$) affording a self dual bent sequence $X$ (resp. $Y$) for the multiplier $\mu_k.$
Then $X\otimes Y$ is a self-dual bent sequence with multiplier $\mu_k,$ for  $H\otimes K \in BH(mn ,q).$ }

\begin{proof} From $HX=\lambda \mu_k(X)$ and $KY=\nu \mu_k(Y)$ we obtain, taking Kronecker products $$H\otimes K(X\otimes Y)= \lambda \nu \mu_k(X)\otimes \mu_k(Y).$$ The result follows by $\mu_k(X)\otimes \mu_k(Y)=\mu_k(X\otimes Y).$
\end{proof}

\noindent{\bf Example:} Taking $q=3,$ we write $j$ for a complex cubic root of unity. We find an example of application of this proposition for $n=3,\,m=9,\ k=2$ when
$X=(1,j,j),$ and $Y=(1,1,1,1,j^2,j,1,j,j^2).$
\subsection{Regular matrices}
A Butson Hadamard matrix $H$ of order $n$ is $regular$ if it has constant row and column sum. A direct connection between self-dual bent sequences and regular Butson Hadamard matrices is as follows. Denote by ${\bf 1}_n$ the all-one vector of length $n.$

{\prop If $H$ is a regular Butson Hadamard matrix of order $n,$ then for all $ u \in \Omega_q,$ we have that $u{\bf 1}_n$ is a self-dual bent sequence for $H$ with trivial multiplier.}

\begin{proof}
	Denote by $\sigma$ the sum of elements of any row. By definition of regular Butson Hadamard matrices $H{\bf 1}_n=\sigma {\bf 1}_n.$
The result follows by linearity.
\end{proof}

A strong existence condition for regular Butson matrices is Lemma 5.2 in \cite{E+}.

\subsection{Bush type}
A Bush-type Hadamard matrix, say $H,$ is a Butson Hadamard matrix of order $n^2$ over the $q^{th}$ roots of unity which is subdivided into $n^2$ blocks $H_{11},H_{12},\dots,H_{nn}$ of order $n$ such that $JH_{ij}=H_{ij}J=\delta_{ij}nJ,$ where $J$ denotes the all ones matrices.
Bush type Butson matrices were constructed recently in \cite{K+}. They have many self-dual bent sequences attached to $H$ as the next result shows.

{\prop If $H$ is a Bush-type Hadamard matrix of order $n^2$ over the $q^{th}$ roots of unity, then there are at least $q^n$ self-dual bent sequences with trivial multiplier attached to $H.$}

\begin{proof}
	From the definition, we see that the sequence $X$ defined by $X^T=(u_1 {\bf 1}_n,\dots,u_n{\bf 1}_n),$ where the $u_k$'s are arbitrary in $\Omega_q,$ is a self-dual bent sequence.
\end{proof}

\subsection{Fourier transforms}
Let $q \ge 2$ be an integer, not necessarily a prime power. Let $r \ge 1$ be an integer. We define the inner product $\cdot$ on $\Z_q^r$ as
\be \label{e0-FO}
(x_1, \ldots, x_r) \cdot (y_1, \ldots, y_r)=x_1y_1 + \cdots + x_ry_r,
\ee
where we use the ring structure of $\Z_q$ on the right-hand side. Note that $x\cdot y=y \cdot x$ for all $x,y \in \Z_q^r$.

Let $n=q^r$ and $H$ be the $n \times n$ matrix over $\C$ defined as follows: For $x,y \in \Z_q^r$,
the entry $H(x,y)$ of $H$ corresponding to row $x$ and  column $y$ is given by
\be \label{e1-FO}
H(x,y)=\zeta_q^{x \cdot y}.
\ee
It follows from the orthogonality of group characters that $H$ is a Butson type Hadamard matrix of order $n$ belonging to $BH(n,q).$

Let $f:\Z_q^r \ra \Z_q$ be a function. Let $X=X(f)$ be the $r \times 1$ matrix over $\C$ such that the $x$-th entry of $X$ is $\zeta_q^{f(x)}$. We assume that $f$ is not the zero function so that $X$ is not the trivial column matrix, i.e. not the full one column.
Recall that if there exists a multiplier $\mu_k: \Omega_q \ra \Omega_q$ and $\lambda \in \Z[\zeta_q]$ such that $HX=\lambda \mu_k(X)$, then $X$ is called a self-dual bent sequence attached to $H$ with multiplier $\mu_k$ and the coefficient $\lambda$ (see Definition \ref{definition.self-dual.bent} above).
{
Let us point out that
when $X$ is a self-dual bent sequence attached to  the conjugate matrix of  (\ref{e1-FO}), then $f: \Z_q^r \ra \Z_q$ is a generalized bent function (in sense of \cite{Sch19}).
}

In the following theorem, when $r$ is even and $f$ is a generalized Maiorana-McFarland (MM) function, we characterize all $f$ such that $X(f)$ is self-dual.

\begin{thm} \label{thm1.FO}
Let $q \ge 2$ be an integer, not necessarily a prime power. Let $m \ge 1$ be an integer. Let $r=2m$. Let $\phi: \Z_q^m \ra \Z_q^m$ be a permutation. Let $f: \Z_q^r \ra \Z_q$ be the map defined as
\be
f(x_1,x_2)=x_1 \cdot \phi(x_2),
\nn\ee
where $\cdot$ is the Euclidean inner product on $\Z_q^m$ defined as in (\ref{e0-FO}). 

Let $1 \le k \le q-1$ and $\mu_k(z)=z^k$ be as in Definition \ref{definition.self-dual.bent}. Put $n=q^r$
and let $H$ be the matrix of order $n$ given in (\ref{e1-FO}). 
Let $X$ be the $n \times 1$ column such that $(x_1,x_2)$-th entry of $X$ is $\zeta_q^{f(x_1,x_2)}$. Then $X$ is a self-dual bent sequence attached to $H$ with
\be
HX=\lambda \mu_k(X)
\nn\ee
if and only if all of the followings hold:
\begin{itemize}
\item[(1)] $x_1 \cdot x_2 + k \phi(x_1) \cdot \phi(x_2)=0$ for all $x_1,x_2 \in \Z_q^m$.
\item[(2)] $\lambda=q^m$.
\end{itemize}
\end{thm}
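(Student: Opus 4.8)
The plan is to compute $HX$ entrywise and compare it with $\lambda\mu_k(X)$ coordinate by coordinate. Write a generic row index as $(a_1,a_2)\in\Z_q^m\times\Z_q^m$ and a generic column index as $(x_1,x_2)$. Then
\be
(HX)(a_1,a_2)=\sum_{x_1,x_2\in\Z_q^m}\zeta_q^{a_1\cdot x_1+a_2\cdot x_2}\,\zeta_q^{x_1\cdot\phi(x_2)}
=\sum_{x_2\in\Z_q^m}\zeta_q^{a_2\cdot x_2}\sum_{x_1\in\Z_q^m}\zeta_q^{x_1\cdot(a_1+\phi(x_2))}.
\nn\ee
By orthogonality of characters of $\Z_q^m$, the inner sum is $q^m$ when $\phi(x_2)=-a_1$ and $0$ otherwise. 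Since $\phi$ is a permutation, there is exactly one such $x_2$, namely $x_2=\phi^{-1}(-a_1)$, so
\be
(HX)(a_1,a_2)=q^m\,\zeta_q^{a_2\cdot\phi^{-1}(-a_1)}.
\nn\ee
On the other side, $\mu_k(X)$ has $(a_1,a_2)$-th entry $\zeta_q^{k f(a_1,a_2)}=\zeta_q^{k\,a_1\cdot\phi(a_2)}$, so the equation $HX=\lambda\mu_k(X)$ becomes, for all $(a_1,a_2)$,
\be
q^m\,\zeta_q^{a_2\cdot\phi^{-1}(-a_1)}=\lambda\,\zeta_q^{k\,a_1\cdot\phi(a_2)}.
\nn\ee

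Next I would extract the two asserted conditions. Taking $(a_1,a_2)=(0,0)$ gives $\lambda=q^m$, which is condition (2). Feeding this back in and using that $\zeta_q$ has order $q$, the equation is equivalent to
\be
a_2\cdot\phi^{-1}(-a_1)\equiv k\,a_1\cdot\phi(a_2)\pmod q\qquad\text{for all }a_1,a_2\in\Z_q^m.
\nn\ee
Now I would reparametrize: since $\phi$ is a bijection of $\Z_q^m$, substitute $a_1=-\phi(x_1)$ (so $\phi^{-1}(-a_1)=x_1$) and rename $a_2=x_2$. The left side becomes $x_2\cdot x_1=x_1\cdot x_2$ and the right side becomes $k\,(-\phi(x_1))\cdot\phi(x_2)=-k\,\phi(x_1)\cdot\phi(x_2)$, giving $x_1\cdot x_2 + k\,\phi(x_1)\cdot\phi(x_2)\equiv 0\pmod q$ for all $x_1,x_2$, which is condition (1). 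Running the substitutions in reverse shows the converse, so the equivalence is complete. I should note explicitly that since $X$ is already defined to have all entries in $\Omega_q$ and $HX=\lambda\mu_k(X)$ with $\lambda=q^m\in\Z\subset\Z[\zeta_q]$, the defining requirements of a self-dual bent sequence (Definition \ref{definition.self-dual.bent}) are met as soon as the displayed eigen-type relation holds.

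The only real subtlety — not hard, but worth stating carefully — is the bookkeeping of the change of variables $a_1\mapsto -\phi(a_1)$ and making sure the quantifiers are handled correctly: because $\phi$ is a permutation of $\Z_q^m$, ranging $a_1$ over all of $\Z_q^m$ is the same as ranging $\phi(x_1)$ (hence $-\phi(x_1)$) over all of $\Z_q^m$, so no solutions are gained or lost. A second minor point is that the character orthogonality relation $\sum_{x_1\in\Z_q^m}\zeta_q^{x_1\cdot c}=q^m[c=0]$ is exactly the one underlying the Butson property of $H$ already invoked after (\ref{e1-FO}), so it may be cited rather than reproved. Everything else is a direct computation with no convergence or well-definedness issues, so I expect the proof to be short.
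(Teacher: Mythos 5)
Your proof is correct and follows essentially the same route as the paper's: both compute the $(a_1,a_2)$-entry of $HX$ by character orthogonality to obtain $q^m\zeta_q^{a_2\cdot\phi^{-1}(-a_1)}$, read off $\lambda=q^m$, and convert the resulting identity into condition (1) via the change of variable $a_1=-\phi(x_1)$. Your evaluation at $(0,0)$ to pin down $\lambda$ and your remark on the quantifier bookkeeping are marginally more explicit than the paper's terser phrasing, but the argument is the same.
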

\begin{proof}
Let $(x_1,x_2) \in \Z_q^r$. The $(x_1,x_2)$-th entry of $HX$ is
\be
\begin{array}{l}
\dd \sum_{y_1 \in \Z_q^m} \sum_{y_2 \in \Z_q^m} \zeta_q^{(x_1,x_2) \cdot (y_1,y_2)} \zeta_q^{f(y_1,y_2)} =\sum_{y_1 \in \Z_q^m} \sum_{y_2 \in \Z_q^m} \zeta_q^{x_1 \cdot y_1 + x_2\cdot y_2 + y_1 \cdot \phi(y_2) } \\ \\
 \dd =\sum_{y_2 \in \Z_q^m} \zeta_q^{ x_2 \cdot y_2 } \sum_{y_1 \in \Z_q^m} \zeta_q^{ y_1 \cdot \left( x_1+\phi(y_2)\right)}  =q^m \zeta_q^{ \phi^{-1}(-x_1) \cdot x_2 }.
\end{array}
\nn\ee
Hence $X$ is a self-dual bent sequence if and only if $\lambda=q^m$ and
\be
\phi^{-1}(-x_1) \cdot x_2 = k x_1 \cdot \phi(x_2) \;\; \mbox{for all $x_1,x_2 \in \Z_q^m$}.
\nn\ee
Consider the change of variable $x_1'=\phi^{-1}(-x_1) \iff \phi(x_1')=-x_1$. Using this change of variable, the last condition is equivalent to the condition that $x_1 \cdot x_2 + k \phi(x_1) \cdot \phi(x_2)=0 \;\; \mbox{for all $x_1,x_2 \in \Z_q^m$}.$
\end{proof}

The following corollary is immediate.

\begin{cor} \label{cor1.FO}
  Let $q \ge 2$ be an integer, not necessarily a prime power. Let $m \ge 1$ be an integer. Let $r=2m$.  Let $H$ be the matrix of order $n$ given in (\ref{e1-FO}). Consider the set
  \be
  S=\left \{ \frac{-1}{d^2}: 1\le d \le q-1, \; \; \gcd(d,q)=1 \right \}. 
  \nn\ee
If $k \in S$, then there exists a self-dual bent sequence $X$ attached to $H$ with $HX=q^m\mu_k(X)$.
\end{cor}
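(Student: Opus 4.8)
The plan is to deduce this directly from Theorem~\ref{thm1.FO} by exhibiting, for each admissible $k$, a permutation $\phi$ of $\Z_q^m$ for which condition~(1) of that theorem holds (with $\lambda = q^m$ automatic). Since $k \in S$, I would first fix an integer $d$ with $1 \le d \le q-1$, $\gcd(d,q) = 1$, and $k d^2 \equiv -1 \pmod q$; note $d^2$ is invertible modulo $q$ precisely because $\gcd(d,q)=1$, so $-1/d^2$ is a well-defined element of $\Z_q$ and the equation $k d^2 \equiv -1$ makes sense. The natural candidate is the scalar map $\phi : \Z_q^m \ra \Z_q^m$, $\phi(x) = d x$. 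Because $d$ is a unit in $\Z_q$, $\phi$ is a bijection, hence a permutation of $\Z_q^m$, so it is an admissible choice in Theorem~\ref{thm1.FO}.

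Next I would verify condition~(1). For all $x_1, x_2 \in \Z_q^m$, bilinearity of the inner product on $\Z_q^m$ gives $\phi(x_1) \cdot \phi(x_2) = (d x_1) \cdot (d x_2) = d^2\,(x_1 \cdot x_2)$, whence
\[
x_1 \cdot x_2 + k\, \phi(x_1) \cdot \phi(x_2) = (1 + k d^2)(x_1 \cdot x_2) = 0
\]
since $1 + k d^2 \equiv 0 \pmod q$. Thus condition~(1) of Theorem~\ref{thm1.FO} holds, and that theorem also forces $\lambda = q^m$. Applying it with $f(x_1,x_2) = x_1 \cdot \phi(x_2) = d\,(x_1 \cdot x_2)$ then yields that the column $X = X(f)$, whose $(x_1,x_2)$-entry is $\zeta_q^{f(x_1,x_2)}$, satisfies $HX = q^m \mu_k(X)$, which is exactly the asserted self-dual bent sequence.

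The only loose end is the sanity check implicit in the set-up preceding Theorem~\ref{thm1.FO}, namely that $f$ is not the zero function so that $X$ is nontrivial: evaluating at $x_1 = x_2 = (1,0,\dots,0)$ gives $f = d \not\equiv 0 \pmod q$ because $1 \le d \le q-1$. There is essentially no obstacle in this argument — the one point needing even a word of justification is that scalar multiplication by $d$ is a permutation of $\Z_q^m$, which is immediate from $d$ being a unit; the rest is the bilinearity identity above. (If desired, the same construction with $\phi(x)=dx$ could be iterated over all valid $d$ to count, not merely produce, such sequences, but for the existence claim one $\phi$ suffices.)
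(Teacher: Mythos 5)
Your proof is correct and follows essentially the same route as the paper: take $d$ with $\gcd(d,q)=1$ and $kd^2\equiv -1 \pmod q$, set $\phi(x)=dx$, use bilinearity to reduce condition (1) of Theorem~\ref{thm1.FO} to $(1+kd^2)(x_1\cdot x_2)=0$, and conclude. Your additional check that $f$ is not the zero function is a harmless (and slightly more careful) supplement to the paper's argument.
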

\begin{proof}
Let $1\le d \le q-1$ be an integer with $\gcd(d,q)=1$. Note that the map $\phi(x)=dx$ is a permutation on $\Z_q^m$. Moreover we have
\be
x_1 \cdot x_2 + k \phi(x_1) \cdot \phi(x_2)= (1+kd^2) x_1\cdot x_2
\nn\ee
for all $x_1,x_2 \in \Z_q^m$. If $k \in S$, then $1+kd^2=0$, which completes the proof.
\end{proof}
{
\begin{re}\label{rm1}
    Taking $d=1$ in Corollary \ref{cor1.FO}, then for $k=q-1$ and $\phi(x)=x$, we have that
    $g(x_1,x_2)=x_1\cdot x_2$ is a (self-dual) generalized bent function with $HX= q^m \;\overline{X}$.
\end{re}
}

Next, we consider some group invariant Butson type Hadamard matrices having self-dual bent sequences.

Let $q \ge 2$ be an integer, not necessarily a prime power. Let $m \ge 1$ be an integer and $r=2m$. Let $n=q^r$ and $H$ be the $n \times n$ matrix over $\C$ defined as follows: For $(x_1,x_2), (y_1,y_2) \in \Z_q^r$,
the entry $H((x_1,x_2),(y_1,y_2))$ of $H$ corresponding to the row $(x_1,x_2)$ and the column $(y_1,y_2)$ is given by
\be \label{e2-FO}
H((x_1,x_2),(y_1,y_2))=\zeta_q^{(x_1-y_1) \cdot (x_2-y_2)}{=\zeta_q^{g((x_1,x_2)-(y_1,y_2))}
\quad \mbox{for $g$ from Remark \ref{rm1}}}.
\ee
This implies that
\be
H((x_1+a_1,x_2+a_2),(y_1+a_1,y_2+a_2))=H((x_1,x_2),(y_1,y_2))
\nn\ee
for all $(a_1,a_2) \in Z_q^r$ and hence the matrix $H$
defined in (\ref{e2-FO}) is $\Z_q^r$-group invariant in the sense of \cite{TD}. Note that the matrix defined in (\ref{e1-FO}) is not  $\Z_q^r$-group invariant. {Let us point out that $H(x,y)=\zeta_q^{g(x-y)}$ is a group invariant Butson type Hadamard matrix when $g\colon \Z_q^n \ra \Z_q$  is a generalized bent function (see \cite[Proposition 2.3]{Sch19}).}

Now we are ready to present the analogous result of Theorem \ref{thm1.FO} for the group invariant
matrix $H$ defined in (\ref{e2-FO}). We also show that $H$ is a Butson type Hadamard matrix in the next theorem.

\begin{thm} \label{thm2.FO}
Let $q \ge 2$ be an integer, not necessarily a prime power. Let $m \ge 1$ be an integer. Let $r=2m$. Let $\phi: \Z_q^m \ra \Z_q^m$ be a permutation. Let $f: \Z_q^r \ra \Z_q$ be the map, different from Theorem \ref{thm1.FO},  defined as
\be
f(x_1,x_2)=x_1 \cdot \phi(x_2) - x_1 \cdot x_2,
\nn\ee
where $\cdot$ is the Euclidean inner product on $\Z_q^m$ defined as in (\ref{e0-FO}). 
Assume that $f$ is not the zero map.

Let $1 \le k \le q-1$ and $\mu_k(z)=z^k$ be as in Definition \ref{definition.self-dual.bent}. Put $n=q^r$
and let $H$ be the matrix of order $n$ given in (\ref{e2-FO}). 
We have that $H$ is a Butson type Hadamard matrix.

Moreover, let $X$ be the $n \times 1$ column such that $(x_1,x_2)$-th entry of $X$ is $\zeta_q^{f(x_1,x_2)}$. Then $X$ is a self-dual bent sequence attached to $H$ with
\be
HX=\lambda \mu_k(X)
\nn\ee
if and only if all of the followings hold:
\begin{itemize}
\item[(1)] $k x_1 \cdot \phi^2(x_2) - (k+1) x_1 \cdot \phi(x_2) + x_1 \cdot x_2=0$ for all $x_1,x_2 \in \Z_q^m$.
\item[(2)] $\lambda=q^m$.
\end{itemize}
\end{thm}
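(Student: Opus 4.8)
The plan is to mimic the computation already carried out in the proof of Theorem \ref{thm1.FO}, the only genuinely new ingredient being the verification that the group-invariant matrix $H$ of (\ref{e2-FO}) is in fact Butson Hadamard. First I would prove this Hadamard claim: since $H((x_1,x_2),(y_1,y_2))=\zeta_q^{g((x_1,x_2)-(y_1,y_2))}$ with $g(x_1,x_2)=x_1\cdot x_2$, the entry is a $q$th root of unity, so we only need $HH^*=nI$. Fix two rows $(x_1,x_2)$ and $(x_1',x_2')$; the corresponding entry of $HH^*$ is $\sum_{(y_1,y_2)\in\Z_q^r}\zeta_q^{g((x_1,x_2)-(y_1,y_2))-g((x_1',x_2')-(y_1,y_2))}$, and after the substitution $u_i=x_i-y_i$ this becomes $\sum_{(u_1,u_2)}\zeta_q^{g(u_1,u_2)-g(u_1+(x_1'-x_1),u_2+(x_2'-x_2))}$. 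Expanding $g$ bilinearly, the exponent is a character in $(u_1,u_2)$ (plus a constant), so the sum is $0$ unless $(x_1,x_2)=(x_1',x_2')$, in which case it is $n=q^r$. This is exactly the statement that $\zeta_q^{g}$ is a generalized bent function, as recorded after (\ref{e2-FO}) citing \cite[Proposition 2.3]{Sch19}; alternatively one invokes Remark \ref{rm1} directly, since $g$ there is self-dual bent.

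Next I would compute the $(x_1,x_2)$-th entry of $HX$. Writing it out, it is
\[
\sum_{y_1\in\Z_q^m}\sum_{y_2\in\Z_q^m}\zeta_q^{(x_1-y_1)\cdot(x_2-y_2)}\,\zeta_q^{y_1\cdot\phi(y_2)-y_1\cdot y_2}.
\]
The exponent $(x_1-y_1)\cdot(x_2-y_2)+y_1\cdot\phi(y_2)-y_1\cdot y_2 = x_1\cdot x_2 - x_1\cdot y_2 - y_1\cdot x_2 + y_1\cdot y_2 + y_1\cdot\phi(y_2) - y_1\cdot y_2 = x_1\cdot x_2 - x_1\cdot y_2 - y_1\cdot x_2 + y_1\cdot\phi(y_2)$. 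Now I would sum over $y_1$ first: the $y_1$-dependent part is $y_1\cdot(\phi(y_2)-x_2)$, so $\sum_{y_1}\zeta_q^{y_1\cdot(\phi(y_2)-x_2)}=q^m$ if $\phi(y_2)=x_2$ and $0$ otherwise. Since $\phi$ is a permutation there is a unique such $y_2$, namely $y_2=\phi^{-1}(x_2)$, and the remaining exponent is $x_1\cdot x_2 - x_1\cdot\phi^{-1}(x_2)$. Hence the $(x_1,x_2)$-th entry of $HX$ equals $q^m\,\zeta_q^{x_1\cdot x_2 - x_1\cdot\phi^{-1}(x_2)}$.

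Then I would compare with $\lambda\mu_k(X)$, whose $(x_1,x_2)$-th entry is $\lambda\,\zeta_q^{k f(x_1,x_2)}=\lambda\,\zeta_q^{k(x_1\cdot\phi(x_2)-x_1\cdot x_2)}$. Matching absolute values (all entries of $X$ and $\mu_k(X)$ have modulus one, and $|HX|$ is constant along coordinates only after we know $\lambda$) forces $\lambda=q^m$, and then the phase equation is
\[
x_1\cdot x_2 - x_1\cdot\phi^{-1}(x_2) \equiv k\,x_1\cdot\phi(x_2) - k\,x_1\cdot x_2 \pmod q \quad\text{for all } x_1,x_2.
\]
Finally, as in Theorem \ref{thm1.FO}, I would make the substitution $x_2\mapsto\phi(x_2)$ (a bijection of $\Z_q^m$), turning $\phi^{-1}(x_2)$ into $x_2$ and $\phi(x_2)$ into $\phi^2(x_2)$; rearranging gives $k\,x_1\cdot\phi^2(x_2) - (k+1)\,x_1\cdot\phi(x_2) + x_1\cdot x_2 = 0$ for all $x_1,x_2\in\Z_q^m$, which is condition (1), together with (2) $\lambda=q^m$. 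The main obstacle is bookkeeping: being careful with the sign conventions in the group-invariant exponent $(x_1-y_1)\cdot(x_2-y_2)$ versus the non-group-invariant case, and making sure the change of variable is applied to the correct slot so that the three resulting inner-product terms land with the stated coefficients; the Hadamard verification itself is routine once one recognizes it as the generalized-bentness of $g$.
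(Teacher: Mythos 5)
Your proposal is correct and follows essentially the same route as the paper's proof: verify $HH^*=q^rI_n$ by the orthogonality of characters, compute the $(x_1,x_2)$-entry of $HX$ by summing over $y_1$ first to get $q^m\zeta_q^{x_1\cdot x_2-x_1\cdot\phi^{-1}(x_2)}$, read off $\lambda=q^m$, and apply the change of variable $x_2\mapsto\phi(x_2)$ to obtain condition (1). The only cosmetic difference is your substitution $u_i=x_i-y_i$ in the Hadamard verification, which is equivalent to the paper's direct expansion.
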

\begin{proof}
We first prove that $H$ defined in (\ref{e2-FO}) is a Butson type Hadamard matrix. Let $(x_1,x_2),(z_1,z_2) \in \Z_q^r$. The entry of $HH^*$ corresponding to the row $(x_1,x_2)$ and the column $(z_1,z_2)$ is
\be
\begin{array}{l}
\dd \sum_{y_1 \in \Z_q^m} \sum_{y_2 \in \Z_q^m} \zeta_q^{(x_1-y_1)\cdot (x_2-y_2)} \zeta_q^{-(y_1-z_1)\cdot (y_2-z_2)} =\sum_{y_1 \in \Z_q^m} \sum_{y_2 \in \Z_q^m} \zeta_q^{x_1 \cdot x_2 - z_1\cdot z_2 + y_1 \cdot (z_2-x_2) + y_2 \cdot (z_1-x_1) } \\ \\
 \dd =\zeta_q^{x_1 \cdot x_2 - z_1\cdot z_2}  \sum_{y_1 \in \Z_q^m} \zeta_q^{ y_1 \cdot (z_2-x_2) } \sum_{y_2 \in \Z_q^m} \zeta_q^{ y_2 \cdot (z_1-x_1) }.
\end{array}
\nn\ee
These arguments and the usual orthogonality relations imply that $HH^*=q^rI_n$. This completes the proof of the statement that $H$ defined in (\ref{e2-FO}) is a Butson type Hadamard matrix.

Let $(x_1,x_2) \in \Z_q^r$. The $(x_1,x_2)$-th entry of $HX$ is
\be
\begin{array}{l}
\dd \sum_{y_1 \in \Z_q^m} \sum_{y_2 \in \Z_q^m} \zeta_q^{(x_1-y_1) \cdot (x_2-y_2)} \zeta_q^{f(y_1,y_2)} =\sum_{y_1 \in \Z_q^m} \sum_{y_2 \in \Z_q^m} \zeta_q^{x_1 \cdot x_2 + y_1\cdot y_2 -x_1\cdot y_2 -x_2 \cdot y_1 -y_1 \cdot y_2 + y_1 \cdot \phi(y_2) } \\ \\
 \dd =\zeta_q^{x_1 \cdot x_2} \sum_{y_2 \in \Z_q^m} \zeta_q^{ -x_1 \cdot y_2 } \sum_{y_1 \in \Z_q^m} \zeta_q^{ y_1 \cdot \left(\phi(y_2)-x_2\right)}  =q^m \zeta_q^{ x_1\cdot x_2 -x_1\cdot \phi^{-1}(x_2)}.
\end{array}
\nn\ee
Hence $X$ is a self-dual bent sequence if and only if $\lambda=q^m$ and
\be
x_1 \cdot x_2 - x_1 \cdot \phi^{-1}(x_2) = -kx_1\cdot x_2 + k x_1 \cdot \phi(x_2) \;\; \mbox{for all $x_1,x_2 \in \Z_q^m$}.
\nn\ee
By a suitable change of variables as in the proof of Theorem \ref{thm1.FO}, the last condition is equivalent to the condition that $kx_1 \cdot \phi^2(x_2) - (k+1)x_1 \cdot \phi(x_2) + x_1 \cdot x_2=0 \;\; \mbox{for all $x_1,x_2\in \Z_q^m$}.$
\end{proof}

The following corollary is immediate.

\begin{cor} \label{cor2.FO}
  Let $q \ge 3$ be an integer, not necessarily a prime power. Let $m \ge 1$ be an integer. Let $r=2m$.  Let $H$ be the matrix of order $n$ given in (\ref{e2-FO}). Consider the set
  \be
  S=\left \{ \frac{1}{d}: 1\le d \le q-1, \; \; \gcd(d,q)=1,\; \; d \neq 1 \right \} 
  \nn\ee
If $k \in S$, then there exists a self-dual bent sequence $X$ attached to $H$ with $HX=q^m\mu_k(X)$.
\end{cor}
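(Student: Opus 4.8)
The plan is to deduce Corollary~\ref{cor2.FO} from Theorem~\ref{thm2.FO} by exhibiting an explicit permutation $\phi$ for which condition~(1) of that theorem is satisfied. Given $k = 1/d \in S$, so that $\gcd(d,q)=1$ and $d \neq 1$, the natural candidate is the scalar multiplication map $\phi(x) = dx$ on $\Z_q^m$; this is a permutation precisely because $d$ is a unit modulo $q$. With this choice we have $\phi^2(x) = d^2 x$, so the left-hand side of condition~(1) becomes
\[
k\, x_1 \cdot \phi^2(x_2) - (k+1)\, x_1 \cdot \phi(x_2) + x_1 \cdot x_2
= \bigl(k d^2 - (k+1)d + 1\bigr)\, x_1 \cdot x_2
\]
for all $x_1, x_2 \in \Z_q^m$. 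It therefore suffices to check that the scalar $k d^2 - (k+1)d + 1$ vanishes in $\Z_q$. Substituting $k = 1/d$ (an identity in $\Z_q$ since $d$ is a unit), one computes $k d^2 - (k+1)d + 1 = d - d - 1 + 1 = 0$. Alternatively, and perhaps more transparently, factor the quadratic as $(kd - 1)(d - 1)$: since $kd = 1$ the first factor is zero, and this factorization makes clear why the coefficient collapses.

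The remaining points to verify are routine. Condition~(2), $\lambda = q^m$, is automatic once condition~(1) holds, by the "if and only if" of Theorem~\ref{thm2.FO}, and indeed the conclusion $HX = q^m \mu_k(X)$ is exactly what that theorem delivers. One should also confirm the hypothesis of Theorem~\ref{thm2.FO} that $f$ is not the zero map: here $f(x_1,x_2) = x_1 \cdot \phi(x_2) - x_1 \cdot x_2 = (d-1)\, x_1 \cdot x_2$, which is not identically zero precisely because $d \neq 1$ and $q \geq 3$ (so that $d - 1 \not\equiv 0$ in $\Z_q$ for some valid $d$, and more to the point, for the chosen $d$ the element $d-1$ is nonzero). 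This is the one place where the hypothesis $d \neq 1$ is genuinely used, and it is also why $S$ excludes $d = 1$ and why one needs $q \geq 3$ for $S$ to be nonempty.

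There is no real obstacle here; the corollary is a direct specialization. The only mild subtlety worth stating carefully is the distinction between the two corollaries: in Corollary~\ref{cor1.FO} the matrix is the Fourier matrix~(\ref{e1-FO}) and the vanishing condition is $1 + kd^2 = 0$, whereas here the group-invariant matrix~(\ref{e2-FO}) carries the extra diagonal-twist term, which shifts the relevant polynomial to $kd^2 - (k+1)d + 1 = (kd-1)(d-1)$; the root structure is correspondingly different, producing $k = 1/d$ rather than $k = -1/d^2$. I would present the proof by simply stating the choice $\phi(x) = dx$, performing the one-line simplification of condition~(1) via the factorization $(kd-1)(d-1) = 0$, noting $f \neq 0$, and invoking Theorem~\ref{thm2.FO}.
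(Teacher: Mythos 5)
Your proof is correct and follows essentially the same route as the paper's: take $\phi(x)=dx$, note that $f\neq 0$ because $d\neq 1$, and reduce condition (1) of Theorem \ref{thm2.FO} to the factorization $(kd-1)(d-1)x_1\cdot x_2$, which vanishes since $kd=1$. Your additional remarks contrasting the two corollaries are accurate but not needed for the argument.
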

\begin{proof}
Let $1\le d \le q-1$ be an integer with $d \neq 1$. Note that the map $\phi(x)=dx$ is a permutation on $\Z_q^m$. Also $f$ is not the zero map as $d \neq 1$. Moreover, we have
\be
k x_1 \cdot \phi^2(x_2) - (k+1) x_1 \cdot \phi(x_2) + x_1 \cdot x_2= \left(kd^2-(k+1)d+1\right) x_1\cdot x_2
=(kd-1)(d-1) x_1 \cdot x_2
\nn\ee
for all $x_1,x_2 \in \Z_q^m$. If $k \in S$, then $kd-1=0$, which completes the proof.
\end{proof}

\begin{re}
Corollaries \ref{cor1.FO} and \ref{cor2.FO} only present very simple examples satisfying conditions of Theorems \ref{thm1.FO} and \ref{thm2.FO}, respectively. For example, if $q=2$ and $m \ge 1$ is any integer, there are also such maps $\phi$ proving self-dual bent sequences satisfying the conditions of Theorem \ref{thm2.FO}.
\end{re}

	\section{Existence conditions}
The following result is inspired by \cite[Lemma 5.2]{E+}. The hypothesis holds true when $H$ admits a self-dual bent sequence with $k=1.$
	{\thm \label{excl} If $H$  in $BH(n,q)$  admits an eigenvalue $\lambda$ in $\Z[\zeta_q],$ with an eigenvector in $\Omega_q^n,$ then there are $q$ nonnegative integers $y_r\le n$ summing up to $n,$ and such that $$ n=\left ( \sum_{r=0}^{q-1} y_rc_r \right )^2+\left ( \sum_{r=0}^{q-1} y_rs_r \right )^2 $$
		where $\zeta_q^r=c_r+is_r$ and $i=\zeta_4.$}
	
	\begin{proof}
		Projecting the eigenvalue equation $Hx=\lambda x$ on the first coordinate yields after dividing out by $x_1$ the relation
		$$\sum_{j=1}^n H_{1j} \frac{x_j}{x_1}=\lambda.$$
		Let $y_r$ denote the number of occurrences of $\zeta_q^r$ among the summands in the above sum. By definition
		$$\sum_{r=0}^{q-1} y_r=n , \;\mbox{where $0\le y_r \le n$}.$$
		A well-known consequence of the Hadamard equation is that
		$$n=|\lambda|^2=a^2+b^2,$$
		where $a=\Re(\lambda),\, b=\Im(\lambda).$ The result follows upon computing $a$ and $b$ from the above expression for $\lambda.$
	\end{proof}
	
	{\re In the literature of combinatorics, the sequence $(y_r)_{r=0}^{q-1}$ is called a {\bf composition} of $n$ into $q$ parts. }
	
	{\ex $H\in BH(6,3)$, $n=6, q=3.$ There are precisely $\binom{6+3-1}{6}=\binom{8}{6}=28$ compositions of $6$ into $3$ parts, namely,
		$$ \begin{aligned}
			6= 6+0+0 = 0+6+0 = 0+0+6 = 1+2+3 = 5+1+0 = 5+0+1 = 1+5+0 \\
			= 2+1+3 = 1+0+5 = 0+1+5 = 0+5+1 = 2+2+2 = 4+2+0 =4+0+2 \\
			=0+4+2 =2+4+0 =2+0+4 =0+2+4 =4+1+1 =1+4+1 =1+1+4 \\
			=3+3+0 =3+0+3 =0+3+3 =3+2+1 =3+1+2 =1+3+2 =2+3+1
		\end{aligned}  $$
		$6=6+0+0=0+6+0=0+0+6,$ when $a^2+b^2=36.$\\
		$6=5+1+0=5+0+1=1+5+0=1+0+5=0+1+5=0+5+1,$ when $a^2+b^2=21.$\\
		$6=4+2+0=4+0+2=0+4+2=2+0+4=0+2+4=2+4+0,$ when $a^2+b^2=12.$\\
		$6=4+1+1=1+1+4=1+4+1=3+3+0=3+0+3=0+3+3,$ when $a^2+b^2=9.$\\
		$6=3+2+1=3+1+2=1+3+2=2+3+1=1+2+3=2+1+3,$ when $a^2+b^2=3.$\\
		$6=2+2+2,$ when $a^2+b^2=0.$\\
		The above $28$ cases satisfy $\sum_{r=0}^{q-1}y_r=6,$ but $a^2+b^2\ne 6,$ so $BH(6,3)$ does not contain matrices having eigenvalues of the type of Theorem \ref{excl}. This is confirmed by computation on matrices of $BH(6,3).$
	}
	{\re The condition in Theorem \ref{excl} is necessary but not sufficient as $BH(8,4)$ contains two matrices neither leading to self-dual bent sequences. Still, $8=2^2+2^2,$ with $y_0=4,\,y_1=y_2=2,\,y_3=0.$ A similar phenomenon occurs for $BH(21,3).$}
	
	\pagebreak
		{\begin{longtable}{|c|c|c|c|}
	        	\caption{Parameters excluded by Theorem \ref{excl}}\\
				\hline
				n & q & \# of compositions & values of $a^2+b^2$ \\
				\hline
				2 & 2 & 3 & 0;4 \\
				\hline
				6 & 3 & 28 & 0;3;9;12;21;36 \\
				\hline
				6 & 4 & 84 & 0;2;4;8;10;16;18;20;26;36 \\
				\hline
				6 & 6 & 462 & 0;1;3;4;7;9;12;13;16;19;21;25;27;28;31;36 \\
				\hline
				8 & 2 & 9 & 0;4;16;36;64 \\
				\hline
				8 & 4 & 165 & 0;2;4;8;10;16;18;20;26;32;34;36;40;50;64 \\
				\hline
				10 & 6 & 3003 & \thead{0;1;3;4;7;9;12;13;16;19;21;25;27;28;31;36;37;39;43;48;49;52;
					57;61;63;64;\\67;73;75;76;79;81;84;91;100} \\
				\hline
				12 & 2 & 13 & 0;4;16;36;64;100;144 \\
				\hline
				12 & 4 & 455 & \thead{0;2;4;8;10;16;18;20;26;32;34;36;40;50;52;58;64;68;72;74;80;82;90;100;104;\\122;144} \\
				\hline
				12 & 6 & 6188 & \thead{0;1;3;4;7;9;12;13;16;19;21;25;27;28;31;36;37;39;43;48;49;52;57;61;63;64;\\67;73;75;76;79;81;84;91;93;97;100;103;108;109;111;112;117;121;124;133;144} \\
				\hline
				13 & 6 & 8568 & \thead{0;1;3;4;7;9;12;13;16;19;21;25;27;28;31;36;37;39;43;48;49;52;57;61;63;64;\\67;73;75;76;79;81;84;91;93;97;100;103;108;109;111;112;117;121;124;127;129;133;139;\\144;147;157;169 }\\
				\hline
				14 & 4 & 680 & \thead{0;2;4;8;10;16;18;20;26;32;34;36;40;50;52;58;64;68;72;74;80;82;90;98;100;104;\\106;116;122;130;144;148;170;196} \\
				\hline
				14 & 6 & 11628 & \thead{0;1;3;4;7;9;12;13;16;19;21;25;27;28;31;36;37;39;43;48;49;52;57;61;63;64;\\67;73;75;76;79;81;84;91;93;97;100;103;108;109;111;112;117;121;124;127;129;133;139;\\144;147;148;151;156;157;163;169;172;183;196} \\
				\hline
				21 & 3 & 253 & \thead{0;3;9;12;21;27;36;39;48;57;63;75;81;84;93;108;111;117;129;144;147;156;171;183;\\189;201;225;228;237;273;279;324;327;381;441} \\
				\hline
				\end{longtable} }

	\section{Distance spectrum of Butson Hadamard codes}
	{\de The {\bf unit sphere} $\Upsilon_d$ in Euclidean $d$-space $\mathbb{R}^d$ is the set of all unit norm vectors:$$\Upsilon_d\triangleq\left \{ x=(x_1,x_2,\dots,x_d)\in \mathbb{R}^d:\left \| x \right \|=1 \right \} $$}
{\de The {\bf hermitian inner product} of $x$ and $y$ in $\mathbb{C}^n$ is defined as
$$\left \langle x,y \right \rangle=\sum_{i=1}^n x_i \overline{y_i},$$
with $\overline{z}$ denoting the complex conjugate of $z.$ The {\bf (squared) Euclidean distance} of $x$ and $y$ is then
$$d_E(x,y)=\left \langle x-y,x-y \right \rangle=\left \langle x,x \right \rangle+\left \langle y,y \right \rangle-2 \Re(\left \langle x,y \right \rangle).$$}
	{\de A {\bf spherical code} in dimension $d$ is a finite set $X \subseteq \Upsilon_d.$ Its minimum distance for the squared Euclidean distance is denoted by $\rho.$
Its parameters are denoted compactly by $(d,\rho,|X|).$ The function $A_d(\rho)$ can then be defined as
$$A_d(\rho)=\max\{ |X| \mid X \text{ spherical code of parameters } (d,\rho,|X|)\}.$$
}
In the next proposition we compute the possible Chinese Euclidean distances of ${C_H}$, which are also the squared Euclidean distances of the spherical code $\phi(\zeta_q^{C_H})$ to be defined below.
	{\prop %For simplicity put $w=\zeta_q$ and write $w^{F_H+i}$ for $w^i w^{F_H}.$ If $w^{C_H}=w^{F_H}\cup w^{F_H+1}\cup \cdots \cup w^{F_H+q-1}$
 The Chinese Euclidean distances of ${C_H}$  are
		$$\left \{ d_E(x,y) \mid x\ne y, x,y\in w^{C_H}\right \}=
		\left \{ 2n \right \} \cup \left \{ 2n(1-\cos\frac{2\pi t}{q}) \mid t=1,2,\dots,\left \lfloor \frac{q}{2} \right \rfloor \right \}. $$ 	}
 \begin{proof}
 For simplicity put $w=\zeta_q$ and write $w^{F_H+i}$ for $w^i w^{F_H}.$ Note that, by definition of $C_H$ we have  $w^{C_H}=w^{F_H}\cup w^{F_H+1}\cup \cdots \cup w^{F_H+q-1}.$
 The Chinese Euclidean distances of ${C_H}$ can then be computed as follows.
 	Let $x=w^{i}h, y=w^{j}h', $ where $h,h^{'}\in w^{F_H}$ and $i,j\in \left \{ 0,1,\dots,q-1 \right \}.$ We have
 	\[
 	\left \langle x,y \right \rangle = \left\{
 	\begin{array}{ll}
 		0, &h\ne h^{'},\\
 		nw^t, &h=h^{'},
 		\end{array}
 	\right.
 	\]
 	where $t\in \left \{ 1,2,\dots,q-1 \right \}.$ From the equation $d_E(x,y)=2n-2\Re(\left \langle x,y \right \rangle ),$ the result follows.
 \end{proof}

The map $\psi:\mathbb{C}\longrightarrow \mathbb{R}^2, x+iy\longmapsto(x,y) $ is an isometry from $\left ( \mathbb{C}^n,d_E \right ) $ to $\left ( \mathbb{R}^{2n},\delta \right ), $ where $\delta (U,V)=\left \| U-V \right \|^2,$ for all $U,V \in \mathbb{R}^{2n}.$ Note that
$(\psi(a),\psi(b))=\Re\big(\left \langle a,b \right \rangle\big), $
where $a$ and $b\in \mathbb{C}^n,$ and $ (,)$ denotes the standard
inner product in $\mathbb{R}^{2n}.$
For normalization purposes,
we will let $\phi(z)=\frac{\psi(z)}{\sqrt{n}},$ for all $z \in \mathbb{C}^n.$

{\cor \label{sphere} The spherical code $\phi(\zeta_q^{C_H}) \subseteq \Upsilon_{2n} \subseteq \mathbb{R}^{2n}$ has a size of $nq$ and a distance of $$\rho=\frac{d_{CE}}{n}=2(1-\cos\frac{2\pi}{q}).$$}

We note an unexpected consequence for the parameters of a Butson Hadamard matrix.

{\prop If $H\in BH(n,q)$ exits, then the following inequality holds.
	$$nq\le A_{2n}(2(1-\cos{\frac{2\pi}{q}})).$$}
We will require a special case of Levenshtein bound  \cite[Theorem 2.5.1]{EZ}.
	{\thm \label{lev} Let $X$ be a spherical code with parameters $(d,\rho,|X|)$ and let $s=1-\frac{\rho}{2}. $ Then the size $|X|$ is necessarily bounded according to $|X|\le L_3(s)=\frac{d(2+(d+1)s)(1-s)}{1-ds^2},$ for $s\in [0,\frac{1}{(\sqrt{d+3}+1)}).$ }

We deduce from Corollary \ref{sphere} the following optimality result in relation to complex Hadamard matrices in the Turyn sense.

{\cor If $H \in BH(n,4),$ then the spherical code $\phi(\zeta_q^{C_H})$ is optimal in dimension $2n.$}

\begin{proof}
By Corollary \ref{sphere} we know that, in the notation of Theorem \ref{lev}, we have $s=\cos(\frac{2\pi}{4})=0,$ and $d=2n.$ The result follows by
Theorem \ref{lev} since the size of the code is $4n=2d.$
\end{proof}

	\section{Covering radius of Butson Hadamard codes}
\subsection{Lower bound}
	{\de The {\bf deviation} of an arbitrary vector $x\in\Omega_q^n$ from a polyphase code $C$ is defined as $$\theta (C,x)=\max_{y\in C}\{\left | \left \langle x,y \right \rangle \right |\}, $$ where $\left \langle x,y \right \rangle$ denotes the hermitian inner product of $x$ and $y.$
}
	
	{\prop \label{dev} If $X$ is a bent sequence for $H\in BH(n,q),$ then its corresponding polyphase code $\zeta_q^{C_H}$ has deviation $\theta (C,X)=\sqrt{n},$ where $X\in\Omega_q^n.$}
	
	\begin{proof}
		$HX=\lambda \mu_k(X),$ where $k$ and $q$ are coprime.
		Let $h_i$ be the $i$-th row of $H,$ then $$\left | \left \langle h_i,X \right \rangle \right |=\left | \lambda \right |\left | \mu_k(X_i) \right |=\sqrt{n}.$$
		Since $\left | \left \langle \zeta_q^t h_i,X \right \rangle \right |=\left | \left \langle \ h_i,X \right \rangle \right |$ for all $i=1,\cdots,n$ and all $t=1,\cdots,q$ the result follows.
		\end{proof}

	%Let $C\subseteq \Omega_q^n,$ be a polyphase code for the squared Euclidean distance $d_E. $ Let $Z$ be the $\mathbb{Z}_q$-code determined by $\xi_q^Z=C$ for the Chinese Euclidean distance $d_{CE}.$\\
	We consider the polyphase codes attached to some $H\in BH(n,q)$ when $q$ equals $4,6$ and $8.$ By induction on $n,$ we can get $d_{CE}(u,v)=d_E(x,y),$ where $x=\zeta_q^u$ and $y=\zeta_q^v$ with $u,v\in \mathbb{Z}_q^n.$\\
	It can be seen by expanding $\left \langle x-y,x-y \right \rangle $ that $$Re(\left \langle x,y \right \rangle )=\frac{2n-d_E(x,y)}{2}=\frac{2n-d_{CE}(u,v)}{2},$$ for all $x,y\in \Omega_q^n.$\\
	The simple inequality $Re(\left \langle x,y \right \rangle )\le \left | \left \langle x,y \right \rangle  \right | $ in the complex field shows that
	$$r_{CE}(C_H)\ge 2n-2\theta(C,x).$$\\
	Combining this fact with the above proposition \ref{dev} yields the following bound.
	{\cor \label{low} If there is a bent sequence $X$ for $H\in BH(n,q),$ then the covering radius of its attached $\mathbb{Z}_q$-code $C_H$ is bounded below as
		$$r_{CE}(C_H)\ge 2n-2\sqrt n.$$}
	
	The following three tables indicate the lower bounds and the true values of the covering radius when $q$ equals $4, 6$ and $8,$ respectively, under the existence of the Butson Hadamard matrix with self-dual bent sequences. The second column  displays the index $k$ of the multiplier $\mu_k.$ A question mark indicates that the computation could not be completed.
		
	{\begin{longtable}{|c|c|c|c|}
			\caption{The covering radius of $C_H$ for $H\in BH(n,4)$}\\
			\hline
			$BH(n,4)$&$k$&Lower bound&True value\\
			\hline
			$BH(4,4)$&1;3&4&4\\
			\hline
			$BH(10,4)$&3&14&14\\
			\hline
			$BH(16,4)(F_4\otimes F_4)$&1;3&$24$& ?\\
			\hline
		\end{longtable}
		
		{\begin{longtable}{|c|c|c|c|}
			\caption{The covering radius of $C_H$ for $H\in BH(n,6)$}\\
				\hline
				$BH(n,6)$&$k$&Lower bound&True value\\
				\hline
				$BH(7,6)$&5&9&9\\
				\hline
				$BH(9,6)$&1;5&12&12\\
				\hline
				$BH(12,6)(F_6\otimes F_2)$&5&$2(12-\sqrt{12})\approx17.072$& ?\\
				\hline
			\end{longtable}	
			
			{\begin{longtable}{|c|c|c|c|}
				\caption{The covering radius of $C_H$ for $H\in BH(n,8)$}\\
				\hline
				$BH(n,8)$&$k$&Lower bound&True value\\
				\hline
				$BH(8,8)$&7&$2(8-\sqrt{8})\approx10.343$&$2(8-\sqrt{8})$\\
				\hline			
				\end{longtable}
\subsection{Upper bounds}
In this subsection, we derive upper bounds on the covering radius of the spherical code $\phi(\zeta_q^{C_H})$ (after normalization to belong to the unit sphere) defined in the preceding section, as a function of its strength as a spherical design.
This bound is then directly an upper bound on the covering radius of the code $C_H$ for the Chinese Euclidean distance. However, the two quantities may or may not coincide.
				{\de A  spherical code $X$ is a {\bf 1-design} if its center of mass is the origin or, more concretely, for all coordinate indices $i$ satisfy $ \sum_{x \in X} x_i=0.$ It is {\bf antipodal} if $X=-X.$}

{\prop If $H\in BH(n,q)$ with $q$ is even, then the covering radius of $\phi(\zeta_q^{C_H})$ is at most $\sqrt{2}.$ }

\begin{proof}
The spherical code $\phi(\zeta_q^{C_H})$ is  antipodal, because $-1$ is a power of $\zeta_q.$ The result follows then by \cite[Theorem
 1]{S}.
\end{proof}
 The same bound can be obtained from different hypotheses on $H.$

{\prop If $H$ is dephased, then $\phi(\zeta_q^{C_H})$ is a {\bf 1-design} and its covering radius is at most $\sqrt{2}.$ }
\begin{proof}
Note that $H^t$ is also a Hadamard matrix.
By taking scalar products between columns of $H$ and its first column, which is all-one, we see that $ \sum_{x \in X} x_i=0$ for $i>1.$
That $ \sum_{x \in X} x_1=0,$ follows by the well-known property of the roots of unity that $\sum_{k=0}^{q-1}\zeta_q^k=0.$ Hence the said spherical code is a $1$-design.

The result follows by Table 1 of \cite{FL}.
\end{proof}

{\de A  spherical code $X$ is a {\bf 2-design} if it is  a  {\bf 1-design}, and if, furthermore, for all pairs $i \neq j$ of coordinate indices the following two relations hold.
$$\sum_{x \in X} (x_i^2-x_j^2)=0, \;\sum_{x \in X} x_ix_j=0.$$
}

{\thm If $H\in BH(n,q)$ is dephased, then $\phi(\zeta_q^{C_H})$ is a { 2-design} and its covering radius is at most $\sqrt{2(1-\frac{1}{2n})}.$}

\begin{proof}
We have seen in  Proposition  8.3 that $\phi(\zeta_q^{C_H})$ is a {1-design}.
Denote by $h_j$ the column number $j$ of $H.$ Write $w=\zeta_q.$
That
$$\sum_{x \in X} x_i^2=\sum_{x \in X} x_j^2=q$$ is immediate by the fact that the columns of $H$ have the same norm, by the Hadamard property of $H^t.$ Further, still using that property we get
$$ \sum_{x \in X} x_ix_j=\sum_{t=0}^{q-1}(\psi (w^t h_i),\psi(w^t h_j))=\frac{1}{{n}} \sum_{t=0}^{q-1} \Re(\langle h_i,h_j \rangle)=0.$$
Thus $\phi(\zeta_q^{C_H})$ is a { 2-design}. The upper bound follows then by \cite[Table 1]
{FL}.
\end{proof}
This bound can be sharpened under some extra conditions.

{\thm If $H\in BH(n,q),$ with $q$ even, is dephased, then $\phi(\zeta_q^{C_H})$ is an antipodal { 2-design} and its covering radius is at most $\sqrt{2(1-\frac{1}{\sqrt{2n}})}.$}

\begin{proof}
We see by Theorem 8.1 that $\phi(\zeta_q^{C_H})$ is a {2-design}. It is antipodal by the evenness of $q.$ The result follows then by Theorem 3 of \cite{S}.
\end{proof}

To compare with the results on the previous subsection we can give the following corollary.

{\cor \label{up} If $H\in BH(n,q),$ with $q$ even, is dephased, then the covering radius of $C_H$ for the Chinese Euclidean distance is bounded above as follows.
$$r_{CE}(C_H)\le 2n -\sqrt{2n} $$}

\begin{proof} Immediate upon considering the normalization of the spherical code. \end{proof}
\section{Conclusion and open problems}
In this paper we have explored a new definition of self-dual bent sequences in relation with Butson Hadamard matrices. The strong group of automorphisms which preserves the set of bent sequences for a given multiplier has been introduced, and given a generation algorithm. We have given computational methods to construct, for a given matrix, such a sequence. We have also given infinite families of Butson Hadamard matrices with attending self-dual bent sequences. We have also given an arithmetic criterion that allows to rule out all the matrices of $BH(n,q)$ from admitting eigenvalues allowing to define such sequences. In a second part, we have explored the code attached to the Butson Hadamard matrix over a finite ring and its concomitant spherical code from the viewpoint of the Chinese Euclidean distance, and of the standard Euclidean distance. The existence of bent sequences implies then a lower bound on the covering radius of these codes.
Upper bounds on the covering radius of the spherical code can be provided by showing its strength as a spherical design. The main open problem left is then to know for $n \to \infty$
which bound is closer to the true value of the covering radius of $C_H,$ the lower bound of Corollary \ref{low} or the upper bound of Corollary \ref{up}? A similar question was solved
for the Sylvester matrix and the generalized Sylvester matrix in \cite{Kai,Kaiq}, using difficult techniques from number theory, probability, and  discrepancy theory. In our case, Tables 4, 5, 6 seem to indicate that the true value equals the lower bound.\\ \ \

{\bf Acknowledgement:} The authors are indebted to Denis Krotov for helpful discussions.


\begin{thebibliography}{99}
					\bibitem{T} D. Acar, B. Sara\c{c}, O. Yayla, Butson-Hadamard matrices and Plotkin-optimal  codes over $\Z_{p^e},$ Journal of Algebra and Its Applications, 2023, {\bf 2}. https://doi.org/10.1142/S0219498824501962
                    \bibitem{Grob} W. W. Adams, P. Loustaunau, {\it An Introduction to Gr\"obner bases}, Graduate Studies in Math 3, AMS, 1984.
					\bibitem{ABE} J. A. Armario, I. Bailera, R. Egan,  Butson full propelinear codes, Designs, Codes and Cryptography, 2023, {\bf 91}(2): 333--351.
                    \bibitem{Magma} W. Bosma, J. Cannon, C. Playoust, {The Magma algebra system I: The user language}, Journal of Symbolic Computation, 1997, {\bf 24}(3/4): 235--265.
					\bibitem{W} W. Bruzda, W. Tadej, K. \.Zyczkowski,
					Catalogue of complex Hadamard matrices, 
					available online at {\tt https://chaos.if.uj.edu.pl/\~karol/hadamard/index.html}
					\bibitem{CE} P. Chella  Pandian, On the covering radius of codes over $\Z_6,$ International Journal on Information Theory, 2016, {\bf 5}(2): 01--09.
\bibitem{DGS} P. Delsarte, J. M. Goethals, J. J. Seidel, Spherical codes and designs, Geometriae Dedicata, 1977, {\bf 6}(3): 363--388.
\bibitem{TD} T. D. Duc, B. Schmidt, Bilinear forms on finite abelian groups and group-invariant Butson matrices, Journal of Combinatorial Theory, Series A, 2019, {\bf 166}: 337--351. 
\bibitem{E+} R. Egan, D. L. Flannery, P.\'{O} Cath\'ain, Classifying cocyclic Butson Hadamard matrices.  In: Colbourn, C.J. (ed.) Algebraic Design Theory and Hadamard Matrices, Springer Proceedings in Mathematics and Statistics,  2015, {\bf 133}: 93--106.
					\bibitem{EZ} T. Ericson, V. Zinoviev, {\it Codes on Euclidean spheres}, North Holland, 2001.	
\bibitem{FL} G. Fazekas, V. I. Levenshtein, On upper bounds for code distance and covering radius of designs in polynomial metric spaces, Journal of Combinatorial Theory, Series A, 1995, {\bf 70}(2): 267-288.
\bibitem{Go} E. Gorla, private communication.
\bibitem{G} M. Greferath, G. McGuire, M. E. O'Sullivan, On Plotkin-optimal codes over finite Frobenius rings, Journal of Algebra and Its Applications, 2006, {\bf 4}(6): 799--815.
\bibitem{GH} G. H\"ohn, Conformal designs based on vertex operator algebras, Advances in Mathematics, 2008, {\bf 217}(5): 2301--2335.
					\bibitem{K+} H. Kharaghani, T. Pender, C. Van't Land and V. Zaitsev, Bush type Butson Hadamard matrices, Glasnik Matematicki, 2023.
\bibitem{L} S. Lang, {\it Cyclotomic fields}, Springer, 1978.
\bibitem{deLF}
W. de~Launey, D. Flannery, {\it Algebraic design
	theory}, Mathematical Surveys and Monographs, American
Mathematical Society, Providence, RI, 2011, {\bf 175}.
\bibitem{JL} J. Li, J. Gao, F. W. Fu, Bounds on covering radius of $\mathbb{F}_2R$-linear codes, IEEE Communications Letters, 2021, {\bf 25}(1): 23--27.
                   \bibitem{MS} F. J. MacWilliams,  N. J. A. Sloane, {\it The theory of error-correcting codes}, North-Holland,  Amsterdam, 1977.                  
\bibitem{N} B. D. McKey, nauty User's Guide (version 1.5), Tech. Rpt. TR-CS-90-02, Dept. Computer Science, Austral. Nat. Univ., 1990.
                   \bibitem{M} S. Mesnager, {\it Bent functions: Fundamentals and results}, Springer, Cham, 2016.
                   	\bibitem{Sch19}
                   B. Schmidt,  A survey of 
                   group invariant Butson 
                   matrices and their relation 
                   to generalized bent functions 
                   and various other objects,
                   Radon Series on Computational and Applied Mathematics, 2019, {\bf 23}: 241--251. 
                   \bibitem{Kai} K-U. Schmidt,
Asymptotically optimal Boolean functions, Journal of Combinatorial Theory, Series A, 2019, {\bf 164}: 50--59.

\bibitem{Kaiq} K-U. Schmidt, Highly nonlinear functions over finite fields, Finite Fields and Their Applications, 2020, {\bf 63}, 101640.
\bibitem{PD} P. D. Seymour, T. Zaslavsky, Averaging sets: a generalization of mean values and spherical designs, Advances in Mathematics, 1994, {\bf 52}: 213--240.
					\bibitem{ML} M. Shi, Y. Li,  W. Cheng, D. Crnkovi\'{c}, D. Krotov, P. Sol\'e, Self-dual bent sequences for complex Hadamard matrices, Designs, Codes and Cryptography, 2023, {\bf 91}(4): 1453--1474.
					\bibitem{S+} M. Shi, Y. Li,  W. Cheng, D. Crnkovi\'{c}, D. Krotov, P. Sol\'e,  Self-dual Hadamard bent sequences, Journal of Systems Science and Complexity, 2023, {\bf 36}(2): 894--908. 
					\bibitem{rioul} P. Sol\'e, W. Cheng, S. Guilley and O. Rioul, Bent sequences over Hadamard codes for physically unclonable functions, IEEE International Symposium on Information Theory,  Melbourne, Australia, 2021: 801-806.
					\bibitem{RI}\url{https://riccota2023.math.uniri.hr/wp-content/uploads/2023/07/booklet-RICCOTA.pdf}
				\bibitem{S} P. Sol\'e, The covering radius of spherical designs, European Journal of Combinatorics, 1991, {\bf 12}(5): 423--431.
				\end{thebibliography}
			\end{document}